\newtheorem{prop}{Proposition}[section]
\newtheorem{theorem}{Theorem}[section]
\begin{document}
%
% paper title
% Titles are generally capitalized except for words such as a, an, and, as,
% at, but, by, for, in, nor, of, on, or, the, to and up, which are usually
% not capitalized unless they are the first or last word of the title.
% Linebreaks \\ can be used within to get better formatting as desired.
% Do not put math or special symbols in the title.
\title{SECO: Secure Inference With Model Splitting \\ Across Multi-Server Hierarchy}

% author names and affiliations
% use a multiple column layout for up to three different
% affiliations
% \author{Shuangyi Chen \thanks{Department of Electrical and Computer Engineering, University of Toronto, shuangyi.chen@mail.utoronto.ca}, Ashish Khisti \thanks{Department of Electrical and Computer Engineering, University of Toronto, akhisti@ece.utoronto.ca}}
% \author{Shuangyi Chen \textsuperscript{\textsection}, Ashish Khisti \textsuperscript{\textsection}}

% \author{
%   \IEEEauthorblockN{Shuangyi Chen\IEEEauthorrefmark{1},
%                     and Ashish Khisti \IEEEauthorrefmark{1}}
%   \IEEEauthorblockA{\IEEEauthorrefmark{1}
%   \footnotetext{Department of Electrical and Computer Engineering,
%                     University of Toronto, 
%                     \{shuangyi.chen@mail.utoronto.ca, akhisti@ece.utoronto.ca\}}
%   }}
\author{%
  \IEEEauthorblockN{Shuangyi Chen\IEEEauthorrefmark{1},
                    and Ashish Khisti \IEEEauthorrefmark{1}}
  \IEEEauthorblockA{\IEEEauthorrefmark{1}%
  Department of Electrical and Computer Engineering,
                    University of Toronto, \\
                    \{shuangyi.chen@mail.utoronto.ca, akhisti@ece.utoronto.ca\}}
}
% \author{\IEEEauthorblockN{Shuangyi Chen}
% \IEEEauthorblockA{Department of Electrical and Computer Engineering\\
% University of Toronto}
% \and
% \IEEEauthorblockN{Ashish Khisti}
% \IEEEauthorblockA{Department of Electrical and Computer Engineering\\
% University of Toronto}
% }

% \author{%
%   \IEEEauthorblockN{Anonymous Authors}
% }

% conference papers do not typically use \thanks and this command
% is locked out in conference mode. If really needed, such as for
% the acknowledgment of grants, issue a \IEEEoverridecommandlockouts
% after \documentclass

% use for special paper notices
%\IEEEspecialpapernotice{(Invited Paper)}

% make the title area
\maketitle

\begin{abstract}
In the context of prediction-as-a-service, concerns about the privacy of the data and the model have been brought up and tackled via secure inference protocols. These protocols are built up by using single or multiple cryptographic tools designed under a variety of different security assumptions.

In this paper, we introduce SECO, a secure inference protocol that enables a user holding an input data vector and multiple server nodes deployed with a split neural network model to collaboratively compute the prediction, without compromising either party's data privacy. We extend prior work on secure inference that requires the entire neural network model to be located on a single server node, to a multi-server hierarchy, where the user communicates to a  gateway server node, which in turn communicates to remote server nodes. The inference task is split across the server nodes and must be performed over an encrypted copy of the data vector.

We adopt multiparty homomorphic encryption and multiparty garbled circuit schemes, making the system secure against dishonest majority of semi-honest servers as well as protecting the partial model structure from the user. We evaluate SECO on multiple models, achieving the reduction of computation and communication cost for the user, making the protocol applicable to user's devices with limited resources.
\end{abstract}
\begin{IEEEkeywords}
Distributed machine learning, Security and privacy
\end{IEEEkeywords}
% no keywords

% For peer review papers, you can put extra information on the cover
% page as needed:
% \ifCLASSOPTIONpeerreview
% \begin{center} \bfseries EDICS Category: 3-BBND \end{center}
% \fi
%
% For peerreview papers, this IEEEtran command inserts a page break and
% creates the second title. It will be ignored for other modes.
\IEEEpeerreviewmaketitle

\section{Introduction}
Prediction-as-a-service using neural network models enables a company to deploy a neural network on the cloud and allows users to upload their input and obtain the corresponding prediction results. However, this may lead to privacy concerns and legal issues. There are regulatory constraints such as Health Insurance Portability and Accountability Act (HIPAA) and EU General Data Protection Regulation (GDPR), restricting the way data could be used or shared.  A secure inference protocol enables a user, holding its input, and servers deployed with the neural network to collaboratively evaluate the prediction result, preserving the privacy of both the input and the model parameters. Many cryptographic tools, such as Homomorphic Encryption (HE) \cite{Fan2012SomewhatPF} and secure multiparty computation (MPC), have been applied in this area. Homomorphic encryption allows the data holder to encrypt its data with the public key, any other party who obtains the encrypted data can compute on it without decryption, and only the party holding the secret key can decrypt the computed result. MPC is usually a mixed approach that combines secret sharing and Garbled Circuit \cite{Yao1986HowTG}, etc. Both HE and MPC can protect data privacy. Pure HE-based secure inference protocols \cite{gilad2016cryptonets}\cite{https://doi.org/10.48550/arxiv.1711.05189}\cite{dathathri2019chet}\cite{https://doi.org/10.48550/arxiv.1806.03461}  provides the data and model confidentiality as well as protects the structure of the model. However, it involves heavy computation and does not immediately generalize to non-linear function computations used during predictions.  One popular approach in HE-based protocols is to replace the activation functions in neural networks with approximated polynomials at the cost of reduced accuracy. While MPC protocols are not as computationally heavy as HE-based protocols, they require multiple rounds of communication and can also leak some information about the model to the users. Furthermore, there are hybrid approaches that combine HE and MPC. GAZELLE \cite{Gazelle} relies on a hybrid approach by employing HE and two-party computation, performing linear function evaluation with HE and non-linear function evaluation with Garbled Circuit. DELPHI \cite{Delphi} builds upon techniques from GAZELLE and modifies it to a two-phase protocol. It reduces the use of heavy cryptographic tools in the online phase, thus speeding up online inference. {However, these methods come with certain limitations. Firstly, they reveal the neural network's architecture to the user, which is problematic when the model's structure is intended to remain confidential, a common scenario with today's large language models. Secondly, they require user equipment to possess substantial computational power and communication capabilities. Additionally, they operate under the assumption that the model is hosted on a single server, which is impractical for very large models. For instance, processing a model with the size of GPT-4 necessitates a cluster of multiple GPUs. To process such a large model effectively, distributing the model parameters across several servers, each with limited GPU resources, can enhance processing power.}

\begin{figure}[h]
    \centering
    \includegraphics[scale=0.52]{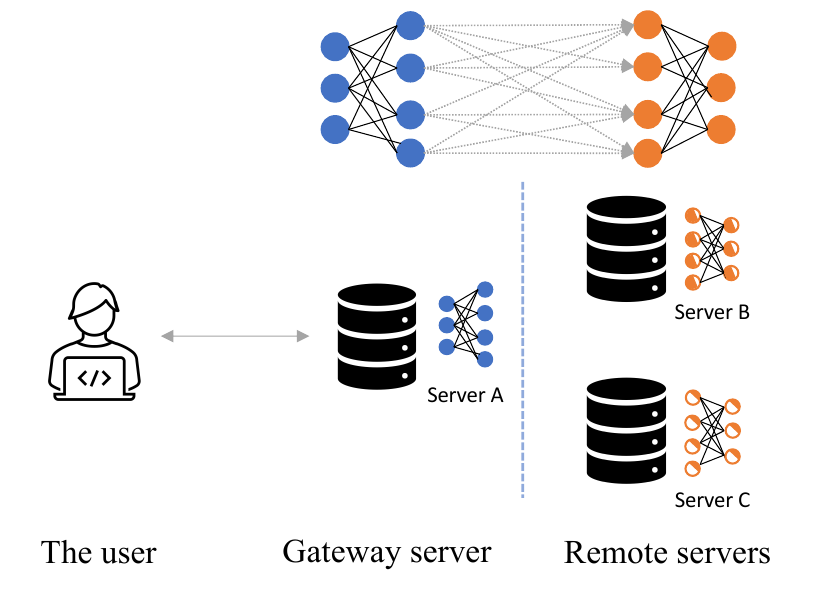} 
    \caption{Protocol Architecture}
    \label{fig:arch}
\end{figure}

With these considerations, we propose SECO, a hybrid HE-MPC secure inference protocol that involves a hierarchy of servers as shown in Figure~\ref{fig:arch}. The user communicates with a {\em gateway} server that stores a part (the parameters associated with the first few layers) of the neural network model. The remainder of the neural network model is stored on remote servers, with each server holding a share of the weights of the linear layers. Our setup extends the secure inference protocol proposed in~\cite{Delphi} to such a multi-server scenario, which has several advantages. First, the user only needs to participate during the evaluation involving part of the network model stored on the gateway server. 
It does not need to participate during evaluations involving the network model stored on the remote servers. This reduces the communication and computational requirements for the user device while increasing the load on the servers, {which enables the use of SECO on lightweight user devices.}
% By controling the number of neural network layers stored in the gateway server and remote servers, we can provide a tradeoff between the computational and communication load at the user and server nodes. 
Furthermore although as in prior works~\cite{Delphi}, the structures of the layers in the neural network model stored on the gateway server must be revealed to the user, the structures of the layers on the remote server are not revealed to the user. {SECO offers the benefit of hiding most of the neural network's architecture, crucial for protecting the proprietary structures of large models like GPT-4, where both the model's parameters and the model's design are valuable assets.} We demonstrate that a straightforward application of prior works to the hierarchical server setting can lead to information leakage. Our proposed setting involves storing a secret share of model weights at the remote server. However, for secure inference of the partial model on the remote servers, a straightforward extension of DELPHI's 2PC protocol to three-party setting with distributed weights is not privacy-preserving due to the limitation of the plain HE scheme. To privately perform the execution of secure inference on the remote servers, we design a sub-protocol relying on multiparty HE, secret sharing, and multiparty Garbled Circuit schemes, {ensuring the privacy of each individual data while facilitating efficient communication. We achieve this by 1) designing new secret sharing scheme and integrating it with multiparty HE for secure computation of linear functions 2) departing from the conventional approach for the use of Garbled Circuits (GC) in DELPHI by redesigning the assignment of roles for secure computation of non-linear functions.} We formally demonstrate that our proposed method prevents information leakage in the presence of dishonest majority of honest-but-curious servers. 
We have implemented SECO and reported the performance. The results indicate that SECO significantly enhances efficiency by reducing the user's online participation time by up to $9.9\times$ and decreasing the user's online communication costs by as much as $18.5\times$ compared to DELPHI. It also provides a versatile tradeoff between the communication and computation loads between the user and server in addition to protecting more information about the neural network than prior works~\cite{Delphi}.

\section{Related Work}
\subsection{Privacy-Preserving Inference on Neural Networks}
Secure Inference has received considerable attention in recent years. It is developed based on the following techniques: homomorphic encryption \cite{gentry2009fully}, secret sharing and garbled circuits \cite{Yao1986HowTG}. Some earlier works utilize a single technique and their models are simple machine learning models such as linear regression \cite{6547119} \cite{wu2012using}, linear classifiers \cite{bost2014machine} \cite{bos2014private} \cite{graepel2012ml}, etc.

\paragraph{HE-based Protocols} Pure HE-based protocols usually protect the complete model information including the architecture but lack support for non-approximated activation functions. CryptoNets \cite{gilad2016cryptonets} was one of the earliest works to use homomorphic encryption on neural network inference. CryptoDL \cite{https://doi.org/10.48550/arxiv.1711.05189} improves upon CryptoNets by choosing the interval and the degree of approximations based on heuristics on the data distribution. CHET \cite{dathathri2019chet} also improves upon CryptoNets by modifying the approximated activation function that CryptoNets uses to $f(x) = ax^2 +bx$ with learnable parameters $a$ and $b$. TAPAS \cite{https://doi.org/10.48550/arxiv.1806.03461} utilizes FHE \cite{chillotti2016faster} to allow arbitrary-layer neural network processing and proposes tricks to accelerate computation on encrypted data. nGraph-HE \cite{https://doi.org/10.48550/arxiv.1810.10121} develops HE-aware graph-compiler optimizations on general matrix-matrix multiplication operations and convolution-batch-norm operation. CHIMERA \cite{Boura2018ChimeraAU} uses different HE schemes for ReLU functions and other neural network functions. It needs to switch between two HE schemes to evaluate the model, leading to high computational costs. Cheetah \cite{9407118} presents a set of optimizations to speed up HE-based secure inference, including auto-tuning HE security parameters, the order scheduler for HE operations, and a hardware accelerator architecture. 
Pure HE-based protocols involve heavy computation and their performance are not practical  due to their costly homomorphic computations. 

\paragraph{MPC-based Protocols}
Pure MPC-based protocols are usually high-throughput but with high communication overhead due to multiplication operations. Furthermore, they usually need to expose the structure of the model to all parties. ABY \cite{demmler2015aby} is a 2PC protocol that supports arbitrary neural network functions by combining Arithmetic, Boolean circuits, and Yao’s garbled circuits. ABY3 \cite{mohassel2018aby3} improves ABY by extending it to 3-party settings and optimizing the conversion between Arithmetic, Boolean circuits, and garbled circuits. {3PC protocol SecureNN \cite{wagh2019securenn} distributes 2-out-of-2 secret sharing of the input among two of three servers. However, it is only demonstrated on small DNNs.  
FALCON, a 3PC protocol proposed by Wagh et al.\cite{wagh2020falcon} uses replicated secret sharing to reduce the number of interactions. However, the use of replicated secret sharing in the context of prediction-as-a-service necessitates an increase in communication costs for the user.
% Protocols including FLASH \cite{byali2019flash}, SWIFT \cite{https://doi.org/10.48550/arxiv.2005.10296}, Trident \cite{chaudhari2019trident}, Fantastic Four \cite{cryptoeprint:2020/1330}, Tetrad \cite{koti2021tetrad} are in the 4PC setting, assuming honest majority which is a strong assumption of the system. 
To sum up, MPC-based protocols rely on assumptions such as non-collusion, or an honest majority among the servers. Such a strong assumption of the system might not always align with the practical environments. Conversely, SECO assumes dishonest majority among servers, meaning the user's input privacy can be protected even when most of the servers are corrupted by the adversary. Thus, SECO can offer a higher degree of trustworthiness to users than pure MPC-based protocols. Furthermore, these 3PC protocols require the user to divide the input into multiple random shares. Each input share is then dispatched to three different servers, thereby increasing the communication costs for the user. }

\paragraph{HE-MPC-based Protocols}
HE-MPC-based Protocols are designed by using both HE and MPC in a hybrid scheme. GAZELLE \cite{Gazelle} designs schemes for HE neural network operations. It uses HE to evaluate linear layers and Garbled Circuits to evaluate non-linear layers, achieving efficient secure inference with high accuracy. DELPHI \cite{Delphi} modifies GAZELLE by designing two phases with the offline phase using HE, speeding up the online inference by evaluating the neural network in plaintext. It also proposes a novel neural network architecture search planner to determine where to approximate ReLU in the neural network, trading off between performance and accuracy. MP2ML \cite{10.1145/3411501.3419425} is proposed based on a novel combination of nGraph-HE \cite{https://doi.org/10.48550/arxiv.1810.10121} and ABY \cite{demmler2015aby}, combining additive secret sharing and CKKS \cite{cheon2017homomorphic} HE scheme. {However, HE-MPC-based Protocols face several limitations. Firstly, they require revealing the neural network's architecture to users, problematic for confidential models like large language models. Secondly, they demand substantial computational power and communication capabilities from user equipment, which can be a barrier for those with limited resources. Thirdly, these methods assume the model is hosted on a single server, impractical for large models like GPT-4 that need multiple GPUs. SECO is derived from these 2PC HE-MPC-based Protocols and is optimized to address the mentioned limitations.}

\subsection{Split Neural Network}
{
Split neural network is a specific type of neural network architecture. In this method, the neural network is divided into two or more sections, and each section is processed on different devices or servers. It is often used to reduce load for devices with limited resources. SplitFed \cite{thapa2022splitfed} is a federated learning framework that combines federated learning with split learning for better model privacy. However, it applies differential privacy rather than cryptographic tools. 
Pereteanu et al. introduce Split HE \cite{pereteanu2022split}, a framework where the server provides the user with the middle portion of the neural network. In this setup, the user conducts plaintext inference on the middle section of the network, while the server manages the inference of other parts under homomorphic encryption. However, this approach harms the privacy of the deployed model.
Khan et al. propose a secure training protocol named Split Ways \cite{khan2023split} based on U-shaped split learning, which incorporates homomorphic encryption on the client side to protect user input privacy. However, while they deploy only the fully-connected layer on the server, the user is still responsible for training the majority of the neural network. 
Split HE and Split Ways explore the integration of split neural networks with cryptographic tools. However, they do not fully utilize the server's computing resources and continue to depend significantly on the user's device.
}

\section{Preliminaries}
% \subsection{Notations}
%  We provide Table~\ref{tab:tab1} for notation summary.
% \begin{table}
% \centering
% \caption{Notation Summary}\label{tab:tab1}
% \begin{tabular}{c|c}
% \toprule
% Symbol & Description \\
% \midrule
% $j$ & Party index \\
% % $\leftarrow$& uniform sampling\\
% $\mathbbm{Z}_{q}$ & $[-\frac{q}{2},\frac{q}{2})$\\
% $R_t$ & Plaintext space for HE scheme\\
% $R_q$ & Ciphertext space for HE scheme \\
% $\leftarrow$ & uniform sampling  \\
% $L$ & Total number of layers of the model  \\
% $l$ & The number of layers stored on server A \\
% $q_{i}$ & Input size of $i^{\text{th}}$ layer\\
% $w_{i}$ & Output size of $i^{\text{th}}$ layer\\
% $\mathbf{F}_{A}$ & The partial model on server A\\
% $\mathbf{F}_{BC}$ & The partial model on server B and C\\
% $\mathbf{F}_{i}$ & Model parameter of the $i^{\text{th}}$ layer\\
% $\mathbf{F}_{i}^{j}$ & Model parameter of the $i^{\text{th}}$ layer on Party $j$\\
% $\mathbf{x}_{i}$ & The input of the model's $i^{\text{th}}$ layer\\
% $\hat{\mathbf{y}}$ & The prediction result\\
% $\mathbf{r}_{i}$ & Randomness masking the input of $i^{\text{th}}$ layer\\
% $\mathbf{s}_{i}$ & Randomness masking the output of $i^{\text{th}}$ layer\\
% $(\textsf{sk}_{j},\textsf{pk}_{j})$ & Key pair generated by party $j$  \\
% $\textsf{cpk}$& Common public key\\
% $ct_x$ & Ciphertext of message $x$ \\
% $C_i$ & Circuit for computing the $i^{\textsf{th}}$ ReLU layer \\
% $\Tilde{C}_i$ &  Garbled circuit correspongind to $C_i$\\
% $\textsf{label}_x$ & GC input labels corresponding to value $x$ \\

% \bottomrule
% \end{tabular}
% \end{table}
\subsection{Split Neural Network}
\label{section:splitnn}
Split Neural Network \cite{vepakomma2018split}\cite{gupta2018distributed} involves splitting a neural network model at different intermediate layers. Consider a neural network function $\mathbf{F}$ composed of a sequence of layers $\{\mathbf{F}_1,...,\mathbf{F}_L\}$. Each layer $\mathbf{F}_i$ can be a linear layer or an activation layer. The prediction result of the model for a given input $\mathbf{x}_1$ can be computed by $\mathbf{F}(\mathbf{x}_1) = \mathbf{F}_L(\mathbf{F}_{L-1}(...(\mathbf{F}_{1}(\mathbf{x}_1))))$. We split the model at the $l^{\text{th}}$ layer to get two models $\mathbf{F}_{\uppercase\expandafter{\romannumeral1}} = \{\mathbf{F}_1,...,\mathbf{F}_l\}$, $\mathbf{F}_{\uppercase\expandafter{\romannumeral2}}  = \{\mathbf{F}_{l+1},...,\mathbf{F}_L\}$ which could be stored at two different nodes. The output can be expressed as $\mathbf{F}(\mathbf{x}_1) = \mathbf{F}_{\uppercase\expandafter{\romannumeral2}}(\mathbf{F}_{\uppercase\expandafter{\romannumeral1}}(\mathbf{x}_1))$. We adjust $l$ to adjust the processing (training or inference) load for the parties deployed with $\mathbf{F}_{\uppercase\expandafter{\romannumeral1}}$ and $\mathbf{F}_{\uppercase\expandafter{\romannumeral2}}$.
\subsection{Cryptographic Blocks}\label{sec:cryptographic-blocks}
We use the following cryptographic blocks to build the protocol.
\subsubsection{The Plain BFV Homomorphic Encryption} \label{plainBFV}
The Brakerskil-Fan-Vercauteren scheme \cite{Fan2012SomewhatPF} is a Ring-Learning with Errors (RLWE)-based cryptosystem that supports additive and multiplicative homomorphic operations. We define the plaintext space to be $R_t = \mathbbm{Z}_t [X]/(X^{n}+1)$ and the ciphertext space to be $R_q = \mathbbm{Z}_q [X]/(X^{n}+1)$, where $n$ is a power of 2. We denote $\Delta = \lceil \frac{q}{t}\rceil$. BFV Homomorphic Encryption scheme can be expressed as a tuple of functionalities $\textsf{HE}=(\textsf{HE.KeyGen}, \textsf{HE.Enc}, \textsf{HE.Dec}, \textsf{HE.Eval})$. The scheme is based on two kinds of distributions: the key distribution $R_3 = \mathbbm{Z}_3 [X]/(X^{N}+1)$ with coefficients uniformly distributed in $\{-1,0,1\}$; the error distribution $\chi$ over $R_q$ with coefficients distributed according to a centered discrete Gaussian.
\begin{itemize}
    \item $\textsf{HE.KeyGen(\text{Params})}\rightarrow (\textsf{pk},\textsf{sk})$:It takes a security parameter as input and returns a public key $\textsf{pk}$ and a secret key $\textsf{sk}$.
    \begin{algorithm}
\caption{$\textsf{HE.KeyGen(params)} \rightarrow (\textsf{pk},\textsf{sk})$} 
\label{alg:keygen}
{\small{
    \begin{algorithmic}[1]
    \State Sample $s \leftarrow R_3$
    \State Let $\textsf{sk} = s$. Sample $p_1 \leftarrow R_q$, $e\leftarrow \chi$
    \State $\textsf{pk} = (p_0,p_1)=(-s p_1+e,p_1)$. Output $(\textsf{pk},\textsf{sk})$
    \end{algorithmic}}}
\end{algorithm}
    \item $\textsf{HE.Enc}(\textsf{pk}, m)\rightarrow ct$: It takes as input the public key $\textsf{pk}$ and message $m$, and outputs the ciphertext of $m$ denoted as $ct$.
     \begin{algorithm}
\caption{$\textsf{HE.Enc}(\textsf{pk}, m) \rightarrow ct$} 
\label{alg:enc}
{\small{
    \begin{algorithmic}[1]
    \State Let $\textsf{pk} = (p_0,p_1)$. Sample $u\leftarrow R_3$, $e_0,e_1\leftarrow \chi$
    \State Output  $ct = (\Delta m +u p_0 +e_0, u p_1+ e_1)$
    \end{algorithmic}}}
\end{algorithm}
\item $\textsf{HE.Dec}(\textsf{sk}, ct)\rightarrow m$: It takes as input the secret key $\textsf{sk}$ and the ciphertext $ct$, and outputs message $
    m$.
    \begin{algorithm}
    % \renewcommand{\thealgorithm}{}
    % \floatname{algorithm}{}
    \caption{$\textsf{HE.Dec}(\textsf{sk}, ct) \rightarrow m$} 
    \label{alg:dec}
    {\small{
    \begin{algorithmic}[1]
    \State Let $\textsf{sk} = s$, $ct =(c_0,c_1)$
    \State Output  $m = [\lfloor\frac{t}{q}[c_0+c_1 s]_q \rceil]_t$
    \end{algorithmic}}}
\end{algorithm}
\item $\textsf{HE.Eval}(\textsf{pk},\{ct_{i},pt_{i}\},f)\rightarrow ct_{f}$: It takes as input the public key $\textsf{pk}$, valid ciphertexts $\{ct_{i}\}$ encrypting $\{m_{i}\}$ or plaintexts $\{pt_{i}\}$ encoding $\{m_{i}\}$ and operation $f$, and returns a valid ciphertext $ct_{f}$ encrypting $f(\{m_{i}\}_i)$. The class of function $f(\cdot)$ that are supported include element-wise addition \textsf{Add}, subtraction \textsf{Sub} and multiplication \textsf{Mul} and slot rotation \textsf{Rot}, as well as a combination of those basic operations.
\end{itemize}
\subsubsection{Multiparty BFV Homomorphic Encryption (MPHE)}
The Multiparty version of BFV Homomorphic Encryption scheme (MPHE) \cite{mouchet2021multiparty} is extended from the plain BFV scheme. It enables multiple distributed parties, each with input in private, to collaboratively compute a function of those inputs without leaking any party's input. In this scheme, a common public key collectively generated by parties is known to all parties, while the corresponding secret key $\textsf{csk}$ is distributed among parties. Below we introduce the main functions used in the protocol. Consider a scenario that $N$ parties want to collaboratively compute a function of their private inputs.
\begin{itemize}
% [itemsep=0.5pt,topsep=0.8pt,parsep=0.5pt]
    \item $\textsf{MPHE.KeyGen}(\textsf{params})\rightarrow (\textsf{pk}_{i},\textsf{sk}_{i})$ :Each party runs $\textsf{HE.KeyGen}(\textsf{params})$ and generates a public key $\textsf{pk}_{i}$ and secret key $
    \textsf{sk}_{i}$. This procedure requires a public polynomial $p_1$, which is agreed upon by all $N$ parties.
    \item $\textsf{MPHE.DKeyGen}(\{\textsf{pk}_{i}\}_{i=0}^{N-1})\rightarrow \textsf{cpk}$ :It returns a common public key $\textsf{cpk}$, for a set of public keys $\{\textsf{pk}_{i}\}_{i=1}^{N-1}$. Its corresponding secret key $\textsf{csk}$ is the aggregation of $\{\textsf{sk}_{i}\}_{i=1}^{N-1}$.
    \begin{algorithm}
    % \renewcommand{\thealgorithm}{}
    % \floatname{algorithm}{}
    \caption{$\textsf{MPHE.DKeyGen}(\{\textsf{pk}_{i}\}_{i=0}^{N-1}) \rightarrow \textsf{cpk}$} 
    \label{alg:dkeygen}
    {\small{
    \begin{algorithmic}[1]
    \State Let $\textsf{pk}_i = (p_{0,i},p_1)$
    \State Output  
    $\textsf{cpk} = ([\sum_{i=0}^{N-1} p_{0,i}]_q,p_1)$
    \end{algorithmic}}}
\end{algorithm}
    \item $\textsf{MPHE.Enc}(\textsf{cpk}, m)\rightarrow ct$ :It returns a ciphertext $ct$ by running $\textsf{HE.Enc}(\textsf{cpk},m)$.
    \item $\textsf{MPHE.Eval}(\textsf{cpk},\{ct_{i},pt_{i}\},f)\rightarrow ct_{f}$ :It returns a ciphertext $ct_{f}$ by running $\textsf{HE.Eval}(\textsf{cpk},\{ct_{i},pt_{i}\},f)$.
    \item $\textsf{MPHE.Reconstruct}(ct, \textsf{sk}_{i})\rightarrow pd_{i}$ :Used by each party $i$. It takes as input a publicly known ciphertext $ct$, party's secret key $\textsf{sk}_{i}$, outputs a partial decryption $pd_{i}$.
    \begin{algorithm}
    % \renewcommand{\thealgorithm}{}
    % \floatname{algorithm}{}
    \caption{$\textsf{MPHE.Reconstruct}(ct, \textsf{sk}_{i}) \rightarrow pd_i$} 
    \label{alg:reconstruct}
    {\small{
    \begin{algorithmic}[1]
    \State Let $\textsf{sk}_i = s_i, ct = (c_0,c_1)$. Sample $e_i \leftarrow \chi$
    \State Output $pd_i = s_i c_1+ e_i$
    \end{algorithmic}}}
\end{algorithm}
    \item $\textsf{MPHE.Dec}(\{pd_{i}\}_{i=0}^{N-1},ct)\rightarrow m'$: It takes as input a set of partial decryption $\{pd_{i}\}_{i=0}^{N-1}$ and the corresponding publicly known ciphertext $ct$. By aggregating the partial decryption, it outputs $m'$ which is equal to $
    \textsf{HE.Dec}(ct,\textsf{csk})$, provided that the noise powers are again selected in a suitable manner.
     \begin{algorithm}
    % \floatname{algorithm}{}
    \caption{$\textsf{MPHE.Dec}(\{pd_{i}\}_{i=0}^{N-1},ct)\rightarrow m'$} 
    \label{alg:mphe-dec}
    {\small{
    \begin{algorithmic}[1]
    \State Let $ct = (c_0,c_1)$
    \State Compute 
    \begin{equation}
        c_0+c_1 s = c_0+ \sum_{i=0}^{N-1}pd_i = c_0+ c_1(\sum_{i=0}^{N-1} s_i)+ \sum_{i=0}^{N-1} e_i \nonumber
    \end{equation}
    % c_0+s c_1 = c_0+ \sum_{i=0}^{N-1}pd_i = c_0+ (\sum_{i=0}^{N-1} s_i)c_1+ \sum_{i=0}^{N-1} e_i$
    \State Output $m' = [\lfloor\frac{t}{q}[c_0+c_1 s]_q \rceil]_t$
    \end{algorithmic}}}
\end{algorithm}

\end{itemize}
\subsubsection{Oblivious Transfer (OT)}
Oblivious transfer \cite{Rabin2005HowTE} \cite{RandomizedProtocol} is a protocol involving two parties, a sender who has an input two messages $m_{0},m_{1}$, and a receiver who holds a bit $b$ as input. Via OT, the receiver obtains $m_b$. The security of the protocol guarantees that the sender does not learn anything about bit $b$ and the receiver does not learn anything about the message $m_{1-b}$.
\subsubsection{Garbled Circuits}
A Garbled Circuit scheme \cite{Yao1986HowTG} \cite{FoundationsofGC} is a cryptographic tool involving two parties joint computing a function $C(x_1,x_2)$ where $x_1$ and $x_2$ are inputs provided by the garbler and the evaluator. The scheme should keep the inputs fully private. It is consists of a tuple of algorithms \textsf{GC}=(\textsf{GC.Garble, GC.Transfer, GC.Eval}) with the following syntax:
\begin{itemize}
    \item $\textsf{GC.Garble}(\textsf{Params}, C)\rightarrow (\Tilde{C},\textsf{label)}$: Used by the garbler. Assume $C$ is composed of $G$ gates (e.g., XOR, AND, etc.) in total. The garbler garbles the circuit by 1) Assigning two random $k$-bit labels to each wire in the circuit $C$ corresponding to $0$ and $1$ where $k$ is the security parameter usually set to 128; 2) For each gate $g, g\in [G]$ in $C$, computing a garbled table $\Tilde{C}_g$ with $4$ rows corresponding to $4$ combinations of inputs labels. Each row is the encryption of the output with two corresponding input labels as the encryption key \cite{6547128}. Output $\Tilde{C} = \{\Tilde{C}_g\}_{g=0}^{G-1}$ and labels corresponding to input wires $\textsf{label}$.
    \item $\textsf{GC.Transfer}(\textsf{label}, x_i) \rightarrow$ $ \textsf{label}_{x_i}$ : This function involves both the garbler and the evaluator. If $i=2$, for input $x_2$ held by the garbler, the garbler maps $x_2$ with the labels generated for input wire corresponding to $x_2$ to obtain $\textsf{label}_{x_2}$, then sends $\textsf{label}_{x_2}$ to the evaluator. If $i=1$, for input $x_1$ held by the evaluator, the evaluator cannot simply send $x_1$ to the garbler to obtain $\textsf{label}_{x_1}$. This undermines input privacy. The garbler and the evaluator engage in an OT protocol where the garbler provides labels for inputs wires belonging to the evaluator and the garbler provides the bit-expression of $x_1$. The evaluator will receive the labels corresponding to $x_1$ without exposing $x_1$ to the garbler via OT. This step outputs the correct labels $\textsf{label}_{x_i}$ corresponding to actual input $x_i$.
    % first randomly generates the labels $\{\textsf{label}_{i,0},\textsf{label}_{i,1}\}_{i\in [n]}$ for inputs of $C$, with $n$ as the total number of input wires. Then it takes as input the security parameter, a boolean circuit $C$ (with $n$ input wires) and a set of labels $\{\textsf{label}_{i,0},\textsf{label}_{i,1}\}_{i\in [n]}$, it outputs a garbled circuit $\Tilde{C}$. Here $label_{i,b}$ represents assigning the value $b \in \{0,1\}$ to the $i^{\text{th}}$ input wire. 
    \item $\textsf{GC.Eval}(\Tilde{C},\{\textsf{label}_{x_i}\}_{ i \in \{1,2\}})\rightarrow y$: Used by the evaluator. The evaluator starts from the first gate and uses two input labels with the garbled table $\Tilde{C}_0$ to obtain the output. Then the evaluator performs the evaluation for each gate in order until obtaining the output $y$ of the last gate, where $y = C(x_1,x_2)$.
    % On input a garbled circuit $\Tilde{C}$ and $\{\textsf{label}_{i,x_i}\}_{i \in [n]}$ corresponding to an input $x\in \{0,1\}^{n}$, $\textsf{Eval}$ outputs a string $y=C(\boldmath{X})$
\end{itemize}
A two-party Garbled Circuit scheme can be easily extended to a multiparty computations scheme of arbitrary complexity with an arbitrary number of participants providing inputs. The input providers obtain the labels for their private inputs via OT from the garbler, then they transmit the labels to the evaluator, which reveals nothing about the inputs since the labels are random.

\subsubsection{Additive Secret Share}
Let $q$ be a prime. A 2-of-2 additive secret sharing of $x \in \mathbbm{Z}_{q}$ is a pair $(\langle x \rangle_{1},\langle x \rangle_{2})=(x-r,r)\in \mathbbm{Z}_{q}^{2}$ for a random $r \in \mathbbm{Z}_{q}$ such that $x=\langle x \rangle_{1}+\langle x \rangle_{2}$. Additive secret sharing is perfectly hidden, for example, given a share $\langle x \rangle_{1}$ or $\langle x \rangle_{2}$, the value $x$ is perfectly hidden.
 $n$-of-$n$ additive secret sharing can be easily extended from 2-of-2 additive secret sharing. A $n$-of-$n$ additive secret sharing of $x \in \mathbbm{Z}_{q}$ can be constructed by generating $(r_1,...,r_{n-1}) \in \mathbbm{Z}_{q}^{n-1}$ uniformly at random and setting $r_n = x-\sum_{i=1}^{n-1}r_i $(mod $q$).
 \subsection{Prior Work: DELPHI}
 Here we describe the protocol we built upon: DELPHI \cite{Delphi}. DELPHI is a cryptographic neural network inference protocol. There are two parties in the system of DELPHI: the user and the server. DELPHI uses additive secret sharing pre-built with homomorphic encryption to evaluate linear layers, and garbled circuits pre-built to evaluate activation functions. Let $\mathbf{F} = \{\mathbf{F}_i\}_{i=1}^{L}$ described in Section~\ref{section:splitnn} be the deployed model on the server, $\mathbf{x}_{i} = \mathbf{F}_{i-1}(\mathbf{F}_{i-2}(...(\mathbf{F}_{1}(\mathbf{x}_{1}))))$ is the the intermediate prediction result of the first $i-1$ layers where $\mathbf{x}_{1} \in \mathbbm{Z}_{q}$ is the input held by the user.
 % Figure~\ref{fig:delphi} shows the preprocessing and inference phase of DELPHI. 
 
 In the preprocessing phase, the client and the server pre-compute the secret share which will be used for the inference phase. For each linear layer $i\in[L]$, with the user providing the ciphertext of $\mathbf{r}_i$ and the server providing the plaintext of $\mathbf{F}_i$ and $\mathbf{s}_i$, two parties interact to compute $\mathbf{F}_i \mathbf{r}_i-\mathbf{s}_i$ where $\mathbf{r}_i$ and $\mathbf{s}_i$ are the random masks for building secret shares, resulting in each of the party holding one of the two secret shares of $\mathbf{F}_i \mathbf{r}_i$. For ReLUs following linear layers, the server constructs a garbled circuit $\Tilde{C}_{i}$ based on circuit $C$ computing ReLU function. Then the server transmits $\Tilde{C}_{i}$ to the user and input labels of $\mathbf{F}_i \mathbf{r}_i-\mathbf{s}_i$ and $\mathbf{r}_{i+1}$ via Oblivious Transfer to the user. We note that the algorithm for computing the ciphertext of $\mathbf{F}_i \mathbf{r}_i$ in DELPHI relies on GAZELLE's design \cite{Gazelle} for HE linear operations (convolutional and fully-connected layer). We do not provide details of this algorithm but represent it as a function \textsf{Lin-OP}. With function \textsf{Lin-OP}, ciphertext $ct_{\mathbf{r}_i}$ encrypting $\mathbf{r}_i$ and plaintext $pt_{F_i}$ encoding $\mathbf{F}_i$ as input, \textsf{HE.Eval} will output the ciphertext of $\mathbf{F}_i \mathbf{r}_i$.
 
 In the inference phase, assuming $\mathbf{x}_i$ is the prediction result of the first $i-1$ layers, at the beginning of the $i^{\text{th}}$ layer, the server and the user hold the shares of $\mathbf{x}_i$: $\mathbf{x}_{i}-\mathbf{r}_{i}$ and $\mathbf{r}_{i}$. The server evaluates to get $\mathbf{F}_i(\mathbf{x}_i-\mathbf{r}_i)+\mathbf{s}_i$, resulting in the server and the user holding the secret shares of $\mathbf{F}_i\mathbf{x}_i$. For evaluating ReLU, the server transmits the input labels corresponding to its secret share to the user. Then the client evaluates $\Tilde{C}_{i}$ to obtain a secret share of the ReLU output.

 \section{Secure Inference: Hierarchical Server Configurations}
 % \section{Inference with Model Splitting}
Our proposed setup involves splitting the neural network model across three server nodes. We assume that the user communicates with a gateway server that holds part of the neural network model. The remainder of the model is stored on remote servers that do not directly communicate with the user.  One motivation for considering such a scenario is that the interactions of the user and the server are required in computing each layer in the DELPHI protocol. This increases the computational and communication load on the user. Furthermore, the user gets access to information about the neural network including the shape of each layer and the total number of model layers since the user needs to generate randomness for each linear layer to build up additive secret shares, which is undesirable.  By deploying a subset of the model on remote servers that do not directly interact with the user we can partially address these concerns.
\subsection{Baseline Configuration}
We now explain why a straightforward extension of the DELPHI protocol that splits the neural network model across two server nodes as shown in Fig.~\ref{fig:splitnn} will lead to information leakage. In this setting, the user and the gateway server as well as the gateway server and the remote server respectively execute DELPHI to evaluate the two split models in order as Figure~\ref{fig:splitnn} shows. In this case, the user can be offline when two servers evaluate the second split model and the architecture of the second split model can be kept secret from the user. 

Let us consider deploying $\mathbf{F}_{\uppercase\expandafter{\romannumeral1}},\mathbf{F}_{\uppercase\expandafter{\romannumeral2}}$ (described in Section~\ref{section:splitnn}) on server $A$ and $B$ respectively. When evaluating $\mathbf{F}_{\uppercase\expandafter{\romannumeral2}}$ involving server $A$ and $B$, server $B$ holds the masked intermediate prediction result $\mathbf{x}_i-\mathbf{r}_i$ while randomness $\mathbf{r}_i$ is only contributed by server $A$. Note that if the two servers collude to recover $\mathbf{x}_i$ they can get access to full parameters of the first $i-1$ layers, then execute the white-box model inversion attack proposed in \cite{10.1145/3359789.3359824} to reconstruct the user's input.
\begin{figure}[h]
    \centering
    \includegraphics[scale=0.44]{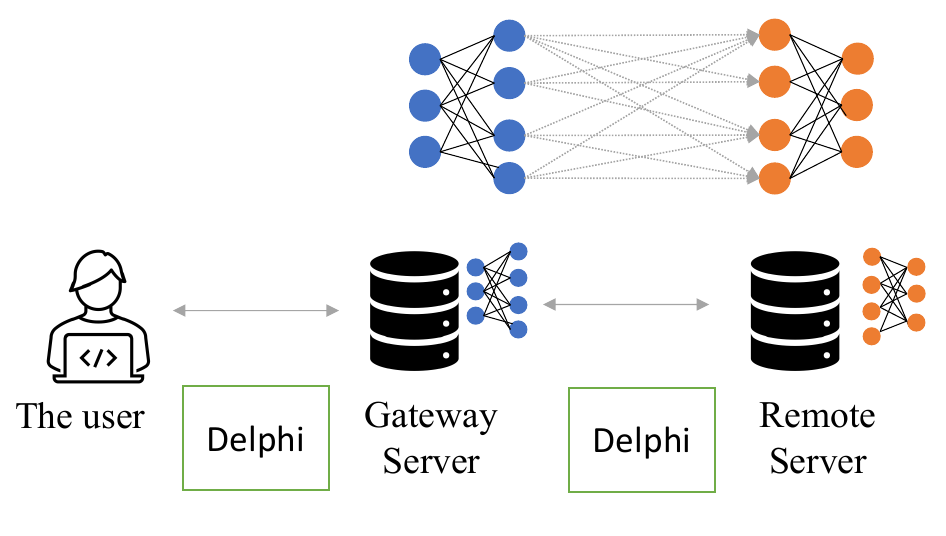} 
    \caption{DELPHI extended to Split Neural Network}
    \label{fig:splitnn}
\end{figure}
\subsection{Proposed Scheme}
We consider a setting with a hierarchy of servers where there are a gateway server and two remote servers as Figure~\ref{fig:arch} shows. We deploy $\mathbf{F}_{\uppercase\expandafter{\romannumeral1}}$ on the gateway server A and the processing of it follows DELPHI. For the second split model $\mathbf{F}_{\uppercase\expandafter{\romannumeral2}}$, the linear layers are randomly divided into 2 shares. We distribute the shares among 2 remote servers. 
% The preprocessing phase and the inference phase of $\mathbf{F}_{\uppercase\expandafter{\romannumeral1}}$ exactly follow DELPHI. Note in the preprocessing phase of DELPHI, the user and the server collaborate to compute two shares of $\mathbf{F}_i \mathbf{r}_i$ where $\mathbf{r}_i$ is the randomness generated by the user.
% Note the key idea of DELPHI is that in the preprocessing phase two parties use HE to build two secret shares of $\mathbf{F}_i\mathbf{r}_i$ where $\mathbf{r}_i$ is randomly generated by the user. In the inference phase, $\mathbf{r}_i$ will be used as the mask for the intermediate prediction result $\mathbf{x}_i$. The evaluation of the masked input for each layer can be processed in plaintext and the masking term can be canceled by the secret share built in the preprocessing phase. 
% However, direct extension from DELPHI to such setting 
% Then, we make use of Multiparty Homomorphic Encryption \cite{mouchet2021multiparty} to build the additive secret shares among the gateway server and remote servers.   Thus, the evaluation can only be conducted when all the servers are present. The distribution of model parameters will not expose the model parameters as long as all the servers do not collude.                                   \begin{figure}[h]
%     \centering
%     \includegraphics[scale=0.37]{arch.png} 
%     \caption{Protocol Architecture}
%     \label{fig:arch}
% \end{figure}
We summarize the main advantages of our proposed architecture as follows:  1) hide partial architecture information of the model from the user and reduce the user's computation and communication cost 2) defend the possible white-box model inversion attack resulting from the model leakage and intermediate prediction result recovery caused by server collusion.  We give the detailed description of the protocol in Section~\ref{section:protocol-description}.

\section{System Overview}
% \begin{figure}[h]
%     \centering
%     \includegraphics[scale=0.25]{Setting.png} 
%     \caption{System setting}
%     \label{fig:setting}
% \end{figure}
As shown in Figure~\ref{fig:arch}, there are four parties in our setting: a user, and hierarchical service providers including server A, B, and C. The hierarchy contains two layers of servers: gateway server A in the first layer, and server B and C in the second layer. Three servers cooperate to provide prediction service of a model $\mathbf{F}$. The model $\mathbf{F} $ can be expressed as $\mathbf{F}=(\mathbf{F}_{A}:\{\mathbf{F}_{i}\}_{i \in \{1,...,l\}},\mathbf{F}_{BC}:\{\mathbf{F}_{i}\}_{i\in \{l+1,...,L\}})$ where $L$ is the total number of model layers. Server A holds the first $l$ layers $\mathbf{F}_{A}$. For $i^{\text{th}} $ layer $i \in \{l+1,...,L\}$, Server B and C hold the random share of linear layer $\mathbf{F}_{i}$, which can be expressed as $\mathbf{F}_{i}=\sum_{j=2}^3\mathbf{F}_{i}^{j}$ ($j=2$ for server B, $j=3$ for server C).
\subsection{Threat Model and Privacy Goals}\label{section:privacy-goals}
\paragraph{Threat model.} {We assume a passive-adversary model with corruption of a user or up to two servers (any two). The parties will not deviate from the protocol but the adversary can corrupt the user or up to $2$ servers so can know corrupted parties' private data and the observations. We assume the adversary cannot corrupt the user and the server simultaneously. Additionally, we assume server A belongs to the service provider while server B and C are external. Deployment of random shares on external servers won't compromise the proprietary model privacy.}
% For the user, we assume it is honest but curious. The user will follow the protocol but try to infer the model parameters. We assume the user cannot collude with any server.

% \subsection{The Privacy Goals}\label{section:privacy-goals}
\paragraph{Privacy Goals.}
For the user, we aim to design a protocol that enables the user to only learn the result of the inference and the architecture of the model on gateway server A. All other information about the model including the model parameters, the total number of layers, and the architecture of the layers on server B, C should be kept secret from the user. 

Assuming up to 2 servers corrupted by the adversary, we have the privacy goals for the servers as the followings: 1) the adversary should not learn any honest party's model parameters or the intermediate prediction result $\mathbf{x}_i$; 2) the adversary should not infer the user's input.

\section{The Protocol: Formal Description}\label{section:protocol-description}
In this section, we introduce our secure inference protocol. The protocol takes as input the user's data $\mathbf{x}_{1}\in \mathbbm{Z}_{q}$ and the split model $(\mathbf{F}_{A}=\{\mathbf{F}_{i}\}_{i \in \{1,...l\}},\mathbf{F}_{BC}=\{\mathbf{F}_{i}\}_{i \in \{l+1,...L\}})$, and enables four parties to execute secure inference collaboratively. The protocol {can be considered as the combination of a 2-PC protocol, executed between the user and the gateway server for the inference of $\mathbf{F}_{A}$, and a 3-PC protocol, executed between the gateway server and two remote servers for the inference of $\mathbf{F}_{BC}$. The 2-PC protocol is adapted from DELPHI \cite{Delphi} and the 3-PC protocol is designed to achieve the secure neural network computations in the presence of dishonest majority of honest-but-curious servers}. The protocol consists of three phases as shown in Figure~\ref{fig:process}: a setup phase, a preprocessing phase, and an online inference phase. 

\begin{figure}[h]
    \centering
    \includegraphics[scale=0.52]{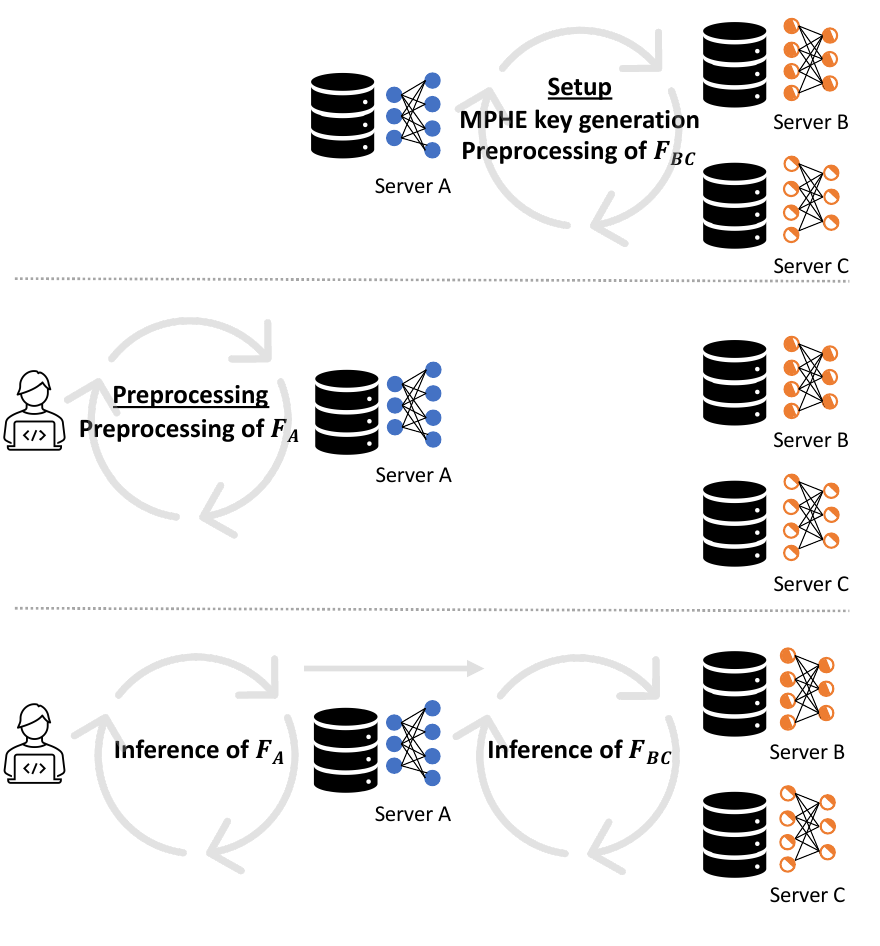} 
    \caption{SECO's 3 phases}
    \label{fig:process}
\end{figure}

\subsection{Setup Phase}
The setup phase is executed before the user appears in the system.
\subsubsection{Key Generation}
This step is to generate keys for homomorphic computation. 
\begin{enumerate}
    \item Each server runs \textsf{HE.KeyGen} to generate key pair $(\textsf{sk}_{j},\textsf{pk}_{j})$ ($j=0$ for the user, $j=1$ for server A, $j=2$ for server B, $j=3$ for server C).
    \item Server A collects the public keys from the three servers, runs $\textsf{MPHE.DKeyGen}$ to obtain a common public key $\textsf{cpk}$, then transmits $\textsf{cpk}$ to the user, server B and C. 
    % \item The user broadcasts $\textsf{pk}_{0}$ to each server in the system through Server A.
\end{enumerate}
At the end of the step, each server holds a common public key \textsf{cpk}.
\subsubsection{Model Preparation}: This step is the preparation for $\mathbf{F}_{BC}$ which can be executed before the user appears in the system. Like in DELPHI, the key insight is to pre-compute the additive secret shares of $\mathbf{F}_{i}\mathbf{r}_{i}$ for each linear layer of $\mathbf{F}_{BC}$. Server A, B, and C collaboratively compute the secret shares of $\mathbf{F}_{i}\mathbf{r}_{i}$ with MPHE scheme, where $\mathbf{F}_{i}$ is distributed among server B and C and the randomness $\mathbf{r}_{i}$ is solely contributed by the user if $i=l+1$ or jointly contributed by three servers if $i\in\{l+2,...,L\}$. The preprocessing of the $l+1^{\text{th}}$ linear layer is required for a secure transition from the inference of $\mathbf{F}_{A}$ to $\mathbf{F}_{BC}$. Additionally, three servers construct garbled circuits for each activation layer in $\mathbf{F}_{BC}$. 
\paragraph{Preprocessing for other linear layers.} For the $i^{\text{th}}$ layer $i \in \{l+2,...L\}$, server A, B, and C contribute to the randomness $\mathbf{r}_{i}$ and collaborate to compute three secret shares of $(\mathbf{F}^2_{l+1}+\mathbf{F}^3_{l+1})\mathbf{r}_{l+1}$ . With server B and C providing the ciphertexts of private model parameter $\mathbf{F}^j_i$, input randomness $\mathbf{r}^j_i$ and output randomness $\mathbf{s}^j_i$, server A computes its share $\mathbf{E}_{i}^1=(\mathbf{F}^2_i+\mathbf{F}^3_i)(\mathbf{r}^1_{i}+\mathbf{r}^2_{i}+\mathbf{r}^3_{i})-\mathbf{s}^2_{i}-\mathbf{s}^3_{i}$, resulting in building the additive shares of $(\mathbf{F}^2_i+\mathbf{F}^3_i)(\mathbf{r}^1_{i}+\mathbf{r}^2_{i}+\mathbf{r}^3_{i})$ among three servers. See Algorithm~\ref{Pre-com-lin-L} for details. In Algorithm~\ref{Pre-com-lin-L}, we use \textsf{MPHE.DisDec} (Algorithm~\ref{DistributeDec}) which is a protocol for the distribute decryption of MPHE scheme involving three servers.
\paragraph{Preprocessing for activation layers}
Three servers collaborate to build the garbled circuit on server A and prepare the evaluation of the activation layers. 
{With server B as the garbler, server C as the evaluator, and server A  providing input as the third party, server B first prepares the garbled circuit $\Tilde{C}_i$ and labels by running \textsf{GC.Garble}$(\textsf{Params}$,
$C_i)$ where \textsf{Params} is the security parameters and $C_i$ is described in Algorithm~\ref{alg:C_i_ABC}. To transmit the labels corresponding to actual input values, server B sends the labels of the input value provided by itself to server C through a public channel. For input values from server A and C, server A and C first obtain the actual labels from server B via OT. Then server C collects the labels obtained by server A through a public channel.}
\subsection{Preprocessing Phase}
In the preprocessing phase, the user first generates a key pair $(\textsf{sk}_{0},\textsf{pk}_{0})$ and broadcasts $\textsf{pk}_{0}$ to each server in the system through Server A. Server A sends common public key $\textsf{cpk}$ to the user. For preprocessing of $\mathbf{F}_{A}$, the user and server A collaboratively pre-compute the secret shares of weights for linear layers and construct garbled circuits for activation layers. 
\subsubsection{Preprocessing for $\mathbf{F}_{A}$}

% This step follows the approach in DELPHI as Algorithm~\ref{pre-com-lin-1} and Algorithm~\ref{pre-bui-GC-1} shows in Appendix~\ref{sec:peseudo}. 
This step follows the approach in DELPHI.
% as Algorithm~\ref{pre-com-lin-1} and Algorithm~\ref{pre-bui-GC-1} shows in Appendix~\ref{sec:peseudo}. 
% The user and server A collaboratively pre-compute the secret shares of weights for linear layers and construct garbled circuits for activation layers in $\mathbf{F}_{A}$.

\paragraph{Preprocessing for linear layers.}
For the $i^{\text{th}}$ linear layer $i \in [1,...,l]$, with the user providing the ciphertext of $\mathbf{r}_i$, server A computes the ciphertext of $\mathbf{F}_i \mathbf{r}_i-\mathbf{s}_i$ where $\mathbf{r}_i$ is the randomness of the layer's input size and $\mathbf{s}_i$ is the randomness of the layer's output size, then the user decrypts it using $\textsf{sk}_0$, resulting in the user holding $\mathbf{F}_i \mathbf{r}_i-\mathbf{s}_i$ and server A holding $\mathbf{s}_i$ which are two additive shares of $\mathbf{F}_i \mathbf{r}_i$.
% \begin{algorithm}
% \caption{Pre-Compute the share for the $i^{\text{th}}$ linear layer, $i \in \{1,...l\}$} 
% \label{pre-com-lin-1}
% {{
%     \begin{algorithmic}[1]
%     \Statex \textbf{Inputs}: $(q_i, w_{i})$: the input and output shape of the $i^{\text{th}}$ layer; $(\textsf{sk}_0, \textsf{pk}_0)$: the plain HE key pair generated by the user
%     \Statex \textbf{The user}: 
%     \Statex \hskip2.0em $\mathbf{r}_{i}\leftarrow \mathbbm{Z}_{q}^{q_i}$ 
%     \Statex \hskip2.0em Encrypt $ct_{r_i}=\textsf{HE.Enc}(\textsf{pk},\mathbf{r}_{i})$
%     \Statex \hskip2.0em Transmit $ct_{r_i}$ to server A 
%     \Statex \textbf{Server A}:
%     \Statex \hskip2.0em $\mathbf{s}_{i} \leftarrow \mathbbm{Z}_{q}^{w_i}$
%     % \Statex \hskip2.0em $\mathbf{d}_{i+1} \leftarrow \mathbbm{Z}_{q}^{q_i}$
%     \Statex \hskip2.0em Compute 
%     $\textsf{HE.Enc}(\textsf{pk},\mathbf{F}_{i}\mathbf{r}_{i}-\mathbf{s}_{i})$ and transmit this ciphertext to the user
%     \Statex \textbf{The user}:
%     \Statex \hskip2.0em Decrypt the ciphertext to obtain $\mathbf{F}_{i}\mathbf{r}_{i}-\mathbf{s}_{i}$ using $\textsf{sk}_0$
%     \end{algorithmic}}}
% \end{algorithm}

\paragraph{Preprocessing for activation layers.} 
 After the preprocessing for linear layers, the user and server A collaborate to build the garbled circuits for activation layers in $\mathbf{F}_A$. $C_i$ is the circuit to compute the activation function. Server A will execute \textsf{GC.Garble} to generate a set of the garbled tables $\Tilde{C}_i$ according to gates in circuit $C_i$ and labels corresponding to all input wires in $C_i$. Then server A transmits the labels for actual input values to the user. For input values held by server A such as one-time password $\mathbf{d}_{i+1}$, server A generates labels based on the bit expression of $\mathbf{d}_{i+1}$ and directly sends it to the user. For input values held by the user such as $\mathbf{F}_{i}\mathbf{r}_{i}-\mathbf{s}_{i},\mathbf{r}_{i+1}$, server A and the user engage in OT protocol to let the user obtain the corresponding labels without server A knowing the actual value.

\begin{algorithm}
\caption{$\textsf{MPHE.DisDec}$} 
\label{DistributeDec}
{\small{
    \begin{algorithmic}[1]
    \Statex \textbf{Inputs}: $ct$: the ciphertext of $m$; ${\textsf{sk}_j}$: the plain secret key generated by the contributor of $\textsf{cpk}$
    \Statex \textbf{Server A}: 
    \Statex \hskip2.0em Transmit $ct$ to server B and C
    \Statex \hskip2.0em $\textsf{MPHE.Reconstruct}(ct,\textsf{sk}_{1})\rightarrow pd_1$
    \Statex \textbf{Server B and C}:
    \Statex \hskip2.0em $\textsf{MPHE.Reconstruct}(ct,\textsf{sk}_{j}) \rightarrow pd_j$
    \Statex \hskip2.0em Transmit $pd_j$ to server A
    \Statex \textbf{Server A}:
    \Statex \hskip2.0em Run $\textsf{MPHE.Dec}(ct,\{pd_j\}_{\{j \in \{1,2,3\}\}})$ to obtain $m$
    \end{algorithmic}}}
\end{algorithm}
\paragraph{Preprocessing for the $l+1^{\text{th}}$ linear layer.} 
In this step, the protocol involves 4 parties to prepare for the transition from the inference of $\mathbf{F}_{A}$ to $\mathbf{F}_{BC}$. 
The user, server A, B, and C collaborate to build the additive shares of $(\mathbf{F}^2_{l+1}+\mathbf{F}^3_{l+1})\mathbf{r}_{l+1}$ among three servers, that is, with server B and C providing the ciphertexts of their own $\mathbf{F}^j_{l+1}$ and the ciphertxts of the output randomness $\mathbf{s}^j_{i+1}$, with the user providing the ciphertext of input randomness $\mathbf{r}_{l+1}$, server A homomorphically computes the ciphertext of $\mathbf{E}_{l+1}=(\mathbf{F}^2_{l+1}+\mathbf{F}^3_{l+1})\mathbf{r}_{l+1}-\mathbf{s}^2_{l+1}-\mathbf{s}^3_{l+1}$ and then involves server B and C to decrypt it. At the end of the preprocessing of the $l+1^{\text{th}}$ linear layer, three server holds the respective secret share of $(\mathbf{F}^2_{l+1}+\mathbf{F}^3_{l+1})\mathbf{r}_{l+1}$. Note all the encryption in this stage are conducted with MPHE public key \textsf{cpk}. 
% See Algorithm~\ref{Pre-com-lin-l-1} in Appendix~\ref{sec:peseudo} for details.
%  \begin{figure*}[htbp]
%     \centering
%     \includegraphics[scale=0.24]{seco_pre.png} 
%     \caption{SECO's preprocessing phase for linear layers of the second split model}
%     \label{fig:seco_pre}
% \end{figure*}
\begin{algorithm}
\caption{ Pre-Compute the share for the $i^{\text{th}}$ linear layer, $i \in \{l+2,...L\}$} 
\label{Pre-com-lin-L}
{\small{
    \begin{algorithmic}[1]
    \Statex \textbf{Inputs}: $(q_i, w_{i})$: the input and output shape of the $i^{\text{th}}$ layer; $\textsf{cpk}$: common public key
    % ; $j$: server index ($1$ for server A, $2$ for server B, $3$ for server C)
    \Statex \textbf{Server A}: 
    \Statex \hskip2.0em $\mathbf{r}_{i}^1\leftarrow \mathbbm{Z}_{q}^{q_i}$
    \Statex \hskip2.0em Compute $ct_{r^1_i}=\textsf{MPHE.Enc}(\textsf{cpk}$,$\mathbf{r}^1_{i})$
    \Statex \textbf{Server B and C}: 
    \Statex \hskip2.0em $\mathbf{r}_{i}^j\leftarrow \mathbbm{Z}_{q}^{q_i}$, $\mathbf{s}_{i}^j\leftarrow \mathbbm{Z}_{q}^{w_i}$ 
    \Statex \hskip2.0em Compute $ct_{r^j_i}=\textsf{MPHE.Enc}(\textsf{cpk}$,$\mathbf{r}^j_{i})$,
    $ct_{s^j_i}=\textsf{MPHE.Enc}(\textsf{cpk},\mathbf{s}^j_{i})$,
    $ct_{F^j_i}=\textsf{MPHE.Enc}(\textsf{cpk},\mathbf{F}^j_{i})$
    \Statex \hskip2.0em Transmit $(ct_{r^j_i},ct_{s^j_i},ct_{F^j_i})$ to server A 
    \Statex \textbf{Server A}:
    \Statex \hskip2.0em \textsf{MPHE.Eval}($ct_{F^2_{i}},ct_{F^3_{i}},\textsf{Add}$)$\rightarrow ct_{F_{i}}$
    \Statex \hskip2.0em \textsf{MPHE.Eval}($ct_{s^2_{i}},ct_{s^3_{i}},\textsf{Add}$)$\rightarrow ct_{s_{i}}$
    \Statex \hskip2.0em \textsf{MPHE.Eval}($ct_{r^1_{i}},ct_{r^2_{i}},ct_{r^3_{i}},\textsf{Add}$)$\rightarrow ct_{r_{i}}$
    \Statex \hskip2.0em \textsf{MPHE.Eval}($ct_{F_{i}},ct_{r_{i}},\textsf{Lin-OP}$)$\rightarrow ct_{F_{i}r_{i}}$
    \Statex \hskip2.0em\textsf{MPHE.Eval}($ct_{F_{i}r_{i}},ct_{s_{i}},\textsf{Sub})
    \rightarrow ct_{F_{i}r_{i}-s_{i}}$
    \Statex \hskip2.0em Transmit $ct_{F_{i}r_{i}-s_{i}}$ to server B and C
    \Statex \textbf{Server A, B, C}:
    \Statex \hskip2.0em Run $\textsf{MPHE.DisDec}(ct_{F_{i}r_{i}-s_{i}},\{\textsf{sk}_j\}_{j \in \{1,2,3\}})$
    \Statex \hskip2.0em Server A obtains $\mathbf{E}_i=(\mathbf{F}^2_i+\mathbf{F}^3_i)\mathbf{r}_{i}-\mathbf{s}^2_{i}-\mathbf{s}^3_{i}$ where $\mathbf{r}_{i}=\mathbf{r}^1_{i}+\mathbf{r}^2_{i}+\mathbf{r}^3_{i}$ 
    
    \end{algorithmic}}}
\end{algorithm}
% \begin{algorithm}
% \caption{Construct the Garbled Circuits for the $i^{\text{th}}$ activation layer $i \in \{l+1,...,L\}$} 
% \label{pre-bui-GC-l}
% {{
%     \begin{algorithmic}[1]
%     \Statex \textbf{Inputs}: $C_{i}$: the circuit in Algorithm~\ref{alg:C_i_ABC}; $\mathbf{E}_{i}^1$: the pre-computed additive share held by server A; $\mathbf{r}_{i+1}^2$: the randomness generated by server B 
%     \Statex \textbf{Server B}: 
%     \Statex \hskip2.0em \textsf{GC.Garble}$(\textsf{Params},C_i)\rightarrow \Tilde{C}_{i},\textsf{label}$
%     \Statex \hskip2.0em Transmit $\Tilde{C}_{i}$ to server C
%     \Statex \textbf{Server A and B}:
%     \Statex \hskip2.0em \textsf{GC.Transfer}($\textsf{label},\mathbf{E}_{i}^1)\rightarrow $
%     $\textsf{label}_{\mathbf{E}_{i}^1}$
%     \Statex \hskip2.0em Server A obtains $\textsf{label}_{\mathbf{E}_{i}^1}$ transmits it to server C
%     \Statex \textbf{Server B and C}:
%     \Statex \hskip2.0em \textsf{GC.Transfer}($\textsf{label},\mathbf{r}_{i+1}^3)\rightarrow \textsf{label}_{\mathbf{r}_{i+1}^3}$
%     \Statex \hskip2.0em Server C obtains $\textsf{label}_{\mathbf{r}_{i+1}^3}$
%     \end{algorithmic}}}
% \end{algorithm}

\subsection{Online Inference Phase}
This section introduces the inference phase in the order of model layers. First, as preamble, the user computes $\mathbf{x}_{1}-\mathbf{r}_{1}$ and sends it to server A.
% \subsubsection{Inference}
\paragraph{Inference of $\mathbf{F}_{A}$} The inference of $\mathbf{F}_{A}$ involves the participation of the user and server A. Before evaluating the $i^{\text{th}}, i\in [1,...,l]$ layer, server A holds the masked input $\mathbf{x}_{i}-\mathbf{r}_{i}$ and the user holds the input randomness $\mathbf{r}_{i+1}$ for next layer. Server A first computes $\mathbf{F}_{i}(\mathbf{x}_{i}-\mathbf{r}_{i})+\mathbf{s}_{i}$ to evaluate the linear part and sends the labels of it to the user. Collecting the labels of all the input wires in $\Tilde{C}_i$, the user executes \textsf{GC.Eval} to obtain a masked ReLU output as the masked input for the next layer.
\begin{algorithm}
\caption{A circuit $C_{i}$ that computes $ReLU$ function of the $i^{\text{th}}$ layer $i \in \{l+1,...,L\}$} 
\label{alg:C_i_ABC}
{\small{
    \begin{algorithmic}[1]
    \Statex Input from server A: $(\mathbf{F}_{i}^2+\mathbf{F}_{i}^3)\mathbf{r}_{i}-\mathbf{s}_{i}^2-\mathbf{s}_{i}^3,\mathbf{r}_{i+1}^1$
    \Statex Input from server B: $\mathbf{F}_{i}^2(\mathbf{x}_{i}-\mathbf{r}_{i})+\mathbf{s}_{i}^2, \mathbf{r}_{i+1}^2$
    \Statex Input from server C: $\mathbf{F}_{i}^3(\mathbf{x}_{i}-\mathbf{r}_{i})+\mathbf{s}_{i}^3,\mathbf{r}_{i+1}^3$
    \State Compute $(\mathbf{F}_{i}^2+\mathbf{F}_{i}^3)\mathbf{x}_{i}=(\mathbf{F}_{i}^2+\mathbf{F}_{i}^3)\mathbf{r}_{i}-\mathbf{s}_{i}^2-\mathbf{s}_{i}^3+\mathbf{F}_{i}^2(\mathbf{x}_{i}-\mathbf{r}_{i})+\mathbf{s}_{i}^2+\mathbf{F}_{i}^3(\mathbf{x}_{i}-\mathbf{r}_{i})+\mathbf{s}_{i}^3$
    \State Compute $\mathbf{x}_{i+1}=ReLU((\mathbf{F}_{i}^2+\mathbf{F}_{i}^3)\mathbf{x}_{i})$
    \State Output $\mathbf{x}_{i+1}-\mathbf{r}_{i+1}=\mathbf{x}_{i+1}-\mathbf{r}_{i+1}^1-\mathbf{r}_{i+1}^2-\mathbf{r}_{i+1}^3$
    \end{algorithmic}}}
\end{algorithm}

\paragraph{Transition} After the evaluation of $\mathbf{F}_A$, server A holds $\mathbf{x}_{l+1}-\mathbf{r}_{l+1}$. Server A forwards it to server B and C as a transition step from the inference of $\mathbf{F}_A$ to $\mathbf{F}_{BC}$.
\paragraph{Inference of $\mathbf{F}_{BC}$}
The inference of $\mathbf{F}_{BC}$ involves the participation of server A, B, and C. At the beginning of the inference for $i^{\text{th}}$ layer $i \in \{l+1,...,L\}$, server B and C hold $\mathbf{x}_{i}-\mathbf{r}_{i}$ where $\mathbf{r}_{i}$ is solely contributed by the user when $i=l+1$, and is contributed by server A, B and C for linear other layers of $\mathbf{F}_{BC}$. Server A, B and C collaborate
% run Algorithm~\ref{infer_L}
to do inference on $i^{\text{th}}$ layer $i \in \{l+1,...,L\}$. First server B and C will compute $\mathbf{E}_{i}^j=\mathbf{F}_{i}^j(\mathbf{x}_{i}-\mathbf{r}_{i})+\mathbf{s}_{i}^j$, resulting in three servers holding three additive shares of the real prediction result for the current layer $(\mathbf{F}_{i}^2+\mathbf{F}_{i}^3)\mathbf{x}_{i}$. The garbled circuit $\Tilde{C}_i$ (Algorithm~\ref{alg:C_i_ABC}) first aggregates shares to recover $(\mathbf{F}_{i}^2+\mathbf{F}_{i}^3)\mathbf{x}_{i}$, then computes ReLU of it, finally masks the ReLU output with the randomness of the next layer contributed by three servers. The entire process is in the encrypted domain. The garbled circuit for each ReLU layer of $\mathbf{F}_{BC}$ are located at server C. Thus server C can evaluate the garbled circuit when collecting all labels for input wires of $\Tilde{C}_i$ in a secure way. 

\begin{algorithm}
\caption{ Inference for the $i^{\text{th}}$ layer, $i \in \{l+1,...,L\}$}
\label{infer_L}
{\small{
    \begin{algorithmic}[1]
    \Statex \textbf{Inputs}:  $\mathbf{x}_{i}-\mathbf{r}_{i}$: held by server B and C; $\Tilde{C}_{i}$: the garbled circuit of circuit in Algorithm~\ref{alg:C_i_ABC}; \textsf{label}: random labels for input wires of Algorithm~\ref{alg:C_i_ABC}; $\mathbf{r}_{i+1}^1$: randomness from server A; 
    % \Statex \textbf{Server B and C}: 
    $\textsf{label}_{{E}_{i}^1},\textsf{label}_{r_{i+1}^2},\textsf{label}_{r_{i+1}^3}$: labels for actual input values held by server A
    \Statex \textbf{Server B and C}: 
    \Statex \hskip2.0em Compute $\mathbf{E}_{i}^j=\mathbf{F}_{i}^j(\mathbf{x}_{i}-\mathbf{r}_{i})+\mathbf{s}_{i}^j$
    % \Statex \textbf{Server A and C}: 
    % \Statex \hskip2.0em \textsf{GC.Transfer}(\textsf{label}, $\mathbf{E}_{i}^j$)$ 
    %   \rightarrow \textsf{label}_{\mathbf{E}_{i}^j}$
    % \Statex \hskip2.0em Server A obtains $\textsf{label}_{{F}_{i}^2({x}_{i}-{r}_{i})+{s}_{i}^2}$
    % \Statex \textbf{Server B and C}:
    \Statex \hskip2.0em \textsf{GC.Transfer}(\textsf{label}, $\mathbf{E}_{i}^3$)$ 
      \rightarrow \textsf{label}_{{E}_{i}^3}$
    \Statex \hskip2.0em Server C obtains $\textsf{label}_{{E}_{i}^3}$ 
    \Statex \textbf{Server C}: 
    \Statex \hskip2.0em \textsf{GC.Eval}$(\Tilde{C}_{i},\textsf{label}_{{E}_{i}^2},\textsf{label}_{{E}_{i}^3},\textsf{label}_{{E}_{i}^1},\textsf{label}_{r_{i+1}^1},$
    $\textsf{label}_{r_{i+1}^2})\rightarrow \mathbf{x}_{i+1}-\mathbf{r}_{i+1}^1-\mathbf{r}_{i+1}^2$
    \Statex \hskip2.0em Compute $\mathbf{x}_{i+1}-\mathbf{r}_{i+1}=\mathbf{x}_{i+1}-\mathbf{r}_{i+1}^1-\mathbf{r}_{i+1}^2-\mathbf{r}_{i+1}^3$
    \Statex \hskip2.0em Transmit $\mathbf{x}_{i+1}-\mathbf{r}_{i+1}$ to server B
    \end{algorithmic}}}
\end{algorithm}

\subsection{Output}
After the evaluation of $\mathbf{F}_{BC}$, server A holds the share $(\mathbf{F}_{L}^2+\mathbf{F}_{L}^3)\mathbf{r}_{L}-\mathbf{s}_{L}^2-\mathbf{s}_{L}^3$, server B holds $\mathbf{F}_{L}^2(\mathbf{x}_{L}-\mathbf{r}_{L})+\mathbf{s}_{L}^2$ and server C holds $\mathbf{F}_{L}^3(\mathbf{x}_{L}-\mathbf{r}_{L})+\mathbf{s}_{L}^3$.
\begin{enumerate}
    \item Server B and C computes  $\textsf{HE.Enc}(\textsf{pk}_{0},\mathbf{F}_{L}^i(\mathbf{x}_{L}-\mathbf{r}_{L})+\mathbf{s}_{L}^i)$ and sends this ciphertext to server A.
    \item Server A homomorphically add the received ciphertext to compute the ciphertexts $ct_{\hat{\mathbf{y}}}$ of prediction result $\hat{\mathbf{y}}$ where $\hat{\mathbf{y}}=({F}_{L}^2+{F}_{L}^3)\mathbf{x}_{L}$, then transmits $ct_{\hat{\mathbf{y}}}$ to the user.
    \item The user decrypts $ct_{\hat{\mathbf{y}}}$ with $sk_{0}$ to obtain the prediction result $\hat{\mathbf{y}}$.
\end{enumerate}
\subsection{Discussion}
{
We next highlight key technical contributions and the motivations for our design choices.

\paragraph{Secret Sharing for 3PC and Integration with MPHE}
To sum up, during the executions, the protocol processes model $\mathbf{F}_{A}$ and $\mathbf{F}_{BC}$. The processing of $\mathbf{F}_{A}$ follows DELPHI which uses plain HE scheme to build two secret shares of $\mathbf{F}_i\mathbf{r}_i$ among two parties in the preprocessing phase, where $\mathbf{r}_i$ is randomly generated by the user; In the inference phase, $\mathbf{r}_i$ will be used as the mask for the intermediate prediction result $\mathbf{x}_i$ and it can be canceled by the secret share built in the preprocessing phase. Our protocol follows this idea to build three secret shares of $\{\mathbf{F}_i\mathbf{r}_i\}_{i=l+1}^L$ in the preprocessing phase of $\mathbf{F}_{BC}$, where each of the servers contributes its own part to the randomness $\mathbf{r}_i$. However, using a plain HE scheme to compute the secret shares is not privacy-preserving in this case because the model parameters and randomness are distributed. To ensure that each of the three servers exclusively accesses its designated information and no extraneous data, we design a new secret sharing scheme as in Table~\ref{tab:secret-share}, so the shares can be computed with MPHE using the supported operations as described in Section~\ref{plainBFV}, enabling secure computation of linear functions.

\begin{table}
    \caption{The columns depict the three distinct segments of the deployed model, each illustrating contributions from the involved parties and different computational goals. }
{\footnotesize
    \centering
    \begin{tabular}{cccccc}
    \toprule
                        & $\mathbf{F}_{A}$ & $\mathbf{F}_{BC}$ & Transition \\ \midrule
User                    &  $\mathbf{r}_i$  &    &   $\mathbf{r}_i$         \\ \midrule
Server A                &  $\mathbf{F}_i$,$\mathbf{s}_i$  &  $\mathbf{r}_i^1$  &            \\ \midrule
Server B                &    &  $\mathbf{r}_i^2$,$\mathbf{F}_i^2$,$\mathbf{s}_i^2$ &   $\mathbf{F}_i^2$,$\mathbf{s}_i^2$         \\ \midrule
Server C                &    &  $\mathbf{r}_i^3$,$\mathbf{F}_i^3$,$\mathbf{s}_i^3$  &           $\mathbf{F}_i^3$,$\mathbf{s}_i^3$ \\ \midrule
 Shares & \multirow{2}{*}{$\mathbf{F}_i\mathbf{r}_i-\mathbf{s}_i$ } & {$(\mathbf{F}_i^2+\mathbf{F}_i^3)(\mathbf{r}_i^1+\mathbf{r}_i^2+\mathbf{r}_i^3)$} & {$(\mathbf{F}_i^2+\mathbf{F}_i^3)\mathbf{r}_i$} \\
to compute   &             &        $-(\mathbf{s}_i^2+\mathbf{s}_i^3)$          &  $-(\mathbf{s}_i^2+\mathbf{s}_i^3)  $     \\ \midrule
         
    \end{tabular}

    \label{tab:secret-share}}
\end{table}

\paragraph{Assignments of Servers in GC}
For the secure computation of non-linear functions, we deviate from the conventional approach of assigning the layer's input provider as the evaluator in Garbled Circuits (GC), like in DELPHI. Instead, we appoint a remote server as the evaluator and another as the garbler. This approach substantially diminishes the communication complexity during the inference phase, as communication now occurs exclusively between these two remote servers. Additionally, our protocol maintains the security of user's input even in the event of two remote servers being compromised, because of the gateway server's contribution to the masking term during the setup phase. Last but not the least, we design a transition from user-server interaction to server-server interaction to ensure no information leakage in this process.}

\section{Information Leakage Analysis}\label{sec:info-leak-ana}
In this section,  we show the protocol described in Section~\ref{section:protocol-description} achieves the privacy goals described in Section~\ref{section:privacy-goals} under a passive-adversary model. Based on Composition Theorem for semi-honest model (Theorem 7.3.3 in \cite{articleFoC}), by proving the security of the protocol in each phase, the security of the entire protocol is proven. 
% \begin{theorem} \label{theo:composition}
% Suppose that $g$ is
% privately reducible to $f$ and that there exists a protocol for privately computing $f$.
% Then there exists a protocol for privately computing $g$.
% \end{theorem}

We prove the security of the preprocessing and inference phase following the simulation paradigm \cite{lindell2017simulate}. According to the simulation paradigm, a protocol is secure if whatever can be computed by a party involved in the protocol can be computed given only the input and the output of the party. So the security can be modeled by defining a simulator that can generate the view of the party during the procedure given the input and the output. If the view of the party can be simulated based on the input and the output only and is computationally indistinguishable from the real view of the party, it implies the party only learns what can be computed given the input and the output, so the protocol is secure. 

\subsection{Setup phase}
The setup phase is independent of the rest of the protocol. For a given user and the servers, it has to be run only once. The protocol used in the setup phase is a composition of $\textsf{HE.KeyGen}(.)$ and $\textsf{MPHE.DKeyGen}(.)$. 

In Mouchet et al. \cite{mouchet2021multiparty}, it shows that the protocol in the setup phase can securely and correctly generate a valid plain BFV key pair. Provided with valid plain BFV key pair, \textsf{HE.Enc} and \textsf{MPHE.Enc} can output valid BFV ciphertext,
% also based on decisional-RLWE problem 
which guarantees the semantic security of HE and MPHE schemes described in Proposition~\ref{prop-he-security} and Proposition~\ref{prop-mphe-security}. The semantic security of HE and MPHE schemes provides the security basis for the following two phases.

\begin{prop}[HE semantic security]\label{prop-he-security}
    For any two messages $m_1, m_2$, no adversary has an advantage (better than $1/2$ chance) in distinguishing between distributions $\textsf{HE.Enc}(\textsf{pk}, m_1)$ and $\textsf{HE.Enc}(\textsf{pk}, m_2)$.
    % This requires the randomized $\textsf{pk}$ and the randomized encryption algorithm.
\end{prop}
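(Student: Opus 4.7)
The plan is to prove semantic security by a standard hybrid argument that reduces the task of distinguishing encryptions of $m_1$ and $m_2$ to the decisional Ring Learning with Errors (RLWE) assumption, on which the BFV scheme is based. Since it suffices to show that $\textsf{HE.Enc}(\textsf{pk}, m)$ is computationally indistinguishable from the uniform distribution over $R_q^2$ for any fixed message $m$, the proposition then follows by a triangle inequality: encryptions of $m_1$ and $m_2$ are each indistinguishable from the same uniform distribution, hence indistinguishable from each other, leaving any polynomial-time adversary with only negligible advantage over $1/2$.

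I would proceed through three hybrids. Hybrid $H_0$ is the true encryption, $(\Delta m + u p_0 + e_0,\, u p_1 + e_1)$ with public key $\textsf{pk} = (p_0, p_1) = (-s p_1 + e,\, p_1)$, where $s, u \leftarrow R_3$, $p_1 \leftarrow R_q$, and $e, e_0, e_1 \leftarrow \chi$. In $H_1$, I replace the public key $(p_0, p_1)$ with a uniformly random pair $(p_0', p_1') \in R_q^2$; the indistinguishability $H_0 \approx_c H_1$ follows from one invocation of decisional RLWE with secret $s$. In $H_2$, I exploit the fact that $(p_0', p_1')$ is now uniform and view $(u p_0' + e_0,\, u p_1' + e_1)$ as an RLWE sample with fresh small secret $u \in R_3$; replacing this pair with uniform $(a_0, a_1) \in R_q^2$ gives $H_1 \approx_c H_2$ by a second appeal to decisional RLWE. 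In $H_2$ the ciphertext is $(\Delta m + a_0,\, a_1)$, which is identically distributed to the uniform distribution on $R_q^2$ because $a_0$ is uniform and independent of $m$. The same chain applies verbatim to any other message, in particular $m_2$, completing the argument.

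The main obstacle is the careful bookkeeping of which variant of the RLWE assumption is invoked at each hybrid step. The BFV scheme uses a ternary secret $s \leftarrow R_3$ and ternary encryption randomness $u \leftarrow R_3$, both drawn from a narrow distribution rather than uniform on $R_q$, so I would formally invoke the small-secret (ternary) decisional RLWE assumption, which is standard in RLWE-based constructions and reducible to the standard form under suitable parameter choices. A secondary subtlety is to confirm that the error distribution $\chi$ is wide enough for the RLWE assumption to apply while remaining within the noise budget of $\textsf{HE.Dec}$; this is a correctness constraint inherited from the parameter selection for $\textsf{params}$, not a security concern, so it does not affect the indistinguishability reduction. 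With these points fixed, the reduction from a semantic-security adversary to an RLWE distinguisher is purely syntactic and incurs only a constant factor loss in advantage across the two hybrid transitions.
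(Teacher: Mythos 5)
Your proof is correct, but it takes a genuinely different (and more self-contained) route than the paper, which in fact offers no reduction at all for this proposition: the paper simply asserts that, once the setup phase produces a valid BFV key pair (citing Mouchet et al.\ for the multiparty key generation), the semantic security of \textsf{HE.Enc} follows as a known property of the BFV scheme of Fan and Vercauteren. You instead supply the standard underlying argument: two hybrid steps, each an invocation of decisional RLWE --- first to replace the public key $(p_0,p_1)=(-sp_1+e,\,p_1)$ by a uniform pair (RLWE with secret $s$), then to replace $(u p_0' + e_0,\, u p_1' + e_1)$ by uniform (two RLWE samples with fresh secret $u$), after which the ciphertext $(\Delta m + a_0,\, a_1)$ is uniform and independent of $m$, and the claim follows by the triangle inequality. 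This is exactly the textbook IND-CPA proof for LPR/BFV-style encryption, and your care in flagging the ternary-secret variant of RLWE (since both $s$ and $u$ are drawn from $R_3$ rather than from $\chi$ or uniformly) is appropriate and is the one place where a careless write-up would over-claim. What your approach buys is a concrete, citation-free security statement with an explicit (constant-factor) reduction loss; what the paper's approach buys is brevity, since the proposition is genuinely a restatement of a result proved in the cited works rather than something new to this protocol. Neither is wrong, but if you were writing this section of the paper, your version would be the more rigorous choice, perhaps compressed into a remark that the two hybrids are the standard RLWE argument.
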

\begin{prop}[MPHE-based MPC semantic security]\label{prop-mphe-security}
    For any subsets of at most colluded $N-1$ clients, for any two messages $m_1$ and $m_2$, no adversary has an advantage (better than $1/2$ chance) in distinguishing between distributions $\textsf{MPHE.Enc}(\textsf{cpk}, m_1)$ and \textsf{MPHE.Enc}(\textsf{cpk},$m_2$).
\end{prop}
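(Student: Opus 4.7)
The plan is to reduce Proposition~\ref{prop-mphe-security} to Proposition~\ref{prop-he-security} (equivalently, to the decisional RLWE assumption that underlies BFV) by exploiting the additive structure of the common public key. Fix the set $S$ of $N-1$ colluded clients and let $j$ be the unique honest party. In the semi-honest model, the adversary $\mathcal{A}$'s view consists of $\{s_i, e_i\}_{i \in S}$ (sampled honestly by the corrupted parties during $\textsf{HE.KeyGen}$), the shared polynomial $p_1$, the common public key $\textsf{cpk} = (p_0^{\text{sum}}, p_1)$ generated by $\textsf{MPHE.DKeyGen}$, and a challenge ciphertext $ct = \textsf{MPHE.Enc}(\textsf{cpk}, m_b)$ for a uniformly chosen $b \in \{1,2\}$. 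The goal is to show $(\textsf{cpk}, \textsf{MPHE.Enc}(\textsf{cpk}, m_1))$ and $(\textsf{cpk}, \textsf{MPHE.Enc}(\textsf{cpk}, m_2))$ are computationally indistinguishable.

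The argument proceeds by a two-step hybrid. First, I would observe that $p_0^{\text{sum}} = p_{0,j} + \delta$, where $\delta = \sum_{i \in S}(-s_i p_1 + e_i)$ is fully computable by $\mathcal{A}$; hence seeing $\textsf{cpk}$ is equivalent (up to an adversary-computable affine shift) to seeing the honest party's plain BFV public key $\textsf{pk}_j = (p_{0,j}, p_1) = (-s_j p_1 + e_j, p_1)$. Invoking Proposition~\ref{prop-he-security} with the unknown secret $s_j$ and error $e_j$, the component $p_{0,j}$ is pseudorandom over $R_q$, so $\textsf{cpk}$ is indistinguishable from a pair of independent uniform elements $(a_0, a_1)$ in $\mathcal{A}$'s view. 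Second, in the hybrid where $\textsf{cpk}$ has been replaced by $(a_0,a_1)$, the challenge ciphertext becomes $(\Delta m_b + u a_0 + e_0,\; u a_1 + e_1)$ with $u \leftarrow R_3$ and $e_0, e_1 \leftarrow \chi$ fresh. Applying RLWE once more, now with secret $u$ and errors $(e_0, e_1)$ against the public pair $(a_0,a_1)$, the pair $(u a_0 + e_0,\; u a_1 + e_1)$ is indistinguishable from independent uniforms $(z_0, z_1)$. The resulting ciphertext $(\Delta m_b + z_0,\; z_1)$ is then statistically independent of $m_b$, so encryptions of $m_1$ and $m_2$ are computationally indistinguishable by transitivity across the two hybrids.

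The main obstacle is making the first step watertight despite the fact that $N-1$ parties jointly determine $\delta$ and reuse the same public polynomial $p_1$. The key point is that, by the semi-honest assumption, the honest party's contribution $p_{0,j}$ is a freshly drawn RLWE sample that is statistically independent of everything the adversary does up to $\textsf{MPHE.DKeyGen}$; pseudorandomness of $p_{0,j}$ therefore transfers to pseudorandomness of $p_0^{\text{sum}} = p_{0,j} + \delta$ by simple shift-invariance. The reuse of $p_1$ across all parties could in principle produce correlated RLWE samples, but since $\mathcal{A}$ itself generates the corrupted-party samples from its known $\{s_i\}_{i \in S}$, these are ``free'' and do not weaken the single RLWE instance we rely on for party $j$. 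A final minor point is that $\textsf{MPHE.Enc}$ samples $u$ from the small-secret space $R_3$ rather than $R_q$; this is the same small-secret variant of RLWE already used to justify Proposition~\ref{prop-he-security}, so no new hardness assumption is introduced. With these subtleties handled, the two-step hybrid establishes Proposition~\ref{prop-mphe-security}.
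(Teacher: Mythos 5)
Your proposal is correct, but it takes a different route from the paper in the sense that the paper does not actually prove this proposition at all: it simply asserts that the setup phase yields a valid BFV key pair and defers the semantic-security claim to Mouchet et al.\ \cite{mouchet2021multiparty}. Your two-step hybrid --- first using the adversary-computable shift $\delta=\sum_{i\in S}(-s_ip_1+e_i)$ to reduce the pseudorandomness of $\textsf{cpk}$ to a single fresh RLWE sample $p_{0,j}=-s_jp_1+e_j$ of the honest party, then applying RLWE a second time with secret $u$ to argue that $(ua_0+e_0,\,ua_1+e_1)$ masks $\Delta m_b$ --- is exactly the standard argument underlying that citation, so you have essentially reconstructed the proof the paper outsources. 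What your write-up buys is a self-contained reduction that makes explicit where the $N-1$ collusion bound enters (one honest RLWE sample survives the subtraction) and why reusing the common $p_1$ is harmless (the corrupted parties' samples are generated by the adversary itself and carry no information about $s_j$). Two minor points of precision: the first hybrid should be justified by the decisional RLWE assumption directly rather than by Proposition~\ref{prop-he-security}, which as stated concerns ciphertext indistinguishability and not public-key pseudorandomness (you note the equivalence parenthetically, which suffices); and in that hybrid only the first component $a_0$ is replaced, since $p_1$ is already a uniformly sampled common polynomial. Neither point is a gap.
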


\subsection{Preprocessing and Inference Phase}
In this section, we prove the privacy goals in Section~\ref{section:privacy-goals} are achieved in the preprocessing and inference phase following the simulation paradigm \cite{lindell2017simulate}.
% According to simulation paradigm, a protocol is secure if whatever can be computed by a party involved in the protocol can be computed given only the input and the output of the party. So the security can be modeled by defining a simulator that can generate the view of the party during the procedure given the input and the output. If the view of the party can be simulated based on the input and the output only and is computationally indistinguishable from the real view of the party, it implies the party learns only what can be computed given the input and the output, so the protocol is secure. 

% Provided with the security properties (Appendix~\ref{sec:securityProperty}) of the cryptographic blocks (section~\ref{sec:cryptographic-blocks}),

 We can prove the security of the preprocessing and inference phase by 1) providing the real view of the adversary for the case when the user or two servers are corrupted 2) describing simulators for cases where two servers or the user are corrupted 3) comparing each term in the view of the adversary and proving the indistinguishability. We have three privacy goals that are to preserve the privacy of 1) the user's input 2) the intermediate prediction result 3) the complete model parameters. Intuitively, by distributing the model parameters, the honest server's partial model parameters are protected thus the adversary cannot obtain the complete model. The user's input and the intermediate prediction result are protected by the randomness. To prove they are fully secure, we can determine the indistinguishability of each element in the view separately. Note Composition Theorem \cite{articleFoC} can be applied to the sequential compositions of arbitrary protocols involving multiple parties. Based on Composition Theorem \cite{articleFoC}, the security of the protocols for two phases is proven when the composing sub-protocols are secure. The security proof of cryptographic sub-protocols (section~\ref{sec:cryptographic-blocks}) are provided in \cite{mouchet2021multiparty}\cite{article}\cite{eprint-2005-12523} assuming the passive adversary model.
% We consider $\textsf{pk}_0,\textsf{cpk}$ as publicly known encryption key, $\{C_i\}_{i=1}^l, \{\mathbf{F}_i\}_{i=1}^l$ and $\$ as private inputs for server A,  

We name the protocols of two phases as \textsf{PRE} and \textsf{INF} and define the corresponding ideal functionality as 
% \begin{eqnarray}
\begin{align}
    f_{pre}&(\textsf{pk}_{0},\textsf{cpk},\{C_i\}_{i=1}^L,\{\mathbf{F}_i\}_{i=1}^l,\{\mathbf{F}_i^j\}_{i=l+1,j=2,3}^{i=L}) \nonumber
    \\&= \{\{\textsf{share}_{i}^U\}_{i=1}^{l},\{\textsf{share}_{i}^A\}_{i=l+1}^{L},\{\Tilde{C}_i\}_{i=1}^L\} \\
    f_{inf}&(\mathbf{x}_1,\{\textsf{share}_{i}^U\}_{i=1}^{l},\{\textsf{share}_{i}^A\}_{i=l+1}^{L},\{\Tilde{C}_i\}_{i=1}^L) \nonumber
    \\&= \{\{\textsf{msk}_i\}_{i=1}^{L}\}
\end{align}
where $\{\textsf{share}_{i}^U=\mathbf{F}_i\mathbf{r}_i-\mathbf{s}_i\}_{i=1}^l, \{\textsf{share}_{i}^A=(\mathbf{F}_i^2+\mathbf{F}_i^3)(\mathbf{r}_i^1+\mathbf{r}_i^2+\mathbf{r}_i^3)-\mathbf{s}_i^2-\mathbf{s}_i^3\}_{i=l+1}^L, \{\textsf{msk}_{i}= \mathbf{x}_i-\mathbf{r}_i\}_{i=1}^L$. Next we we discuss the cases where server A and B are corrupted and the user is corrupted. 
% The proof for other cases where server B and C are corrupted or server A and C are corrupted can be easily extended from the following proof.
The proofs for other cases, wherein either servers B and C or servers A and C are corrupted, can be readily extended based on the methodology outlined in the following proof.
% in Appendix~\ref{sec:sec-proof}.

 \paragraph{Server A, B corrupted}
The real view $v_{AB}$ of the adversary corrupting server A and B includes:
\begin{enumerate}
    \item Ciphertexts set $ct_{S_{AB}} =\{\{ct_{r_i}\}_{i=1}^{l+1},\{ct_{r_i^2},ct_{r_i^3}\}_{i=l+2}^{L}, \\
    \{ct_{F_i^2},ct_{F_i^3},ct_{s_i^2},ct_{s_i^2}\}_{i=l+2}^{L},
    \{ct_{\textsf{share}_i^A}\}_{i=l+1}^L\}$
    \item The masked input  $\{\textsf{msk}_{i}=\mathbf{x}_i-\mathbf{r}_{i}\}_{i=1}^{l+1},\{\textsf{msk}_{i}=\mathbf{x}_i-\mathbf{r}_{i}^3\}_{l+2}^{L+1}$
     \item The secret shares set $\textsf{share}^{A}_{i}=(\mathbf{F}_{i}^2+\mathbf{F}_{i}^3)\mathbf{r}_{i}-\mathbf{s}_{i}^2-\mathbf{s}_{i}^3,i\in \{l+1,...,L\}$
    \item Garbled circuit $\{\Tilde{C}_i\}_{i=1}^L$ and labels $\{\textsf{label}_{i}\}_{i=l+1}^{L}$ 
\end{enumerate}
We define a simulator $S_{AB}$ that simulates the view of an adversary corrupting server A and B. For a given value $x$, we denote the simulated equivalent by $\Tilde{x}$ and the encryption of it by $ct_x$. $S_{AB}$ is given $\textsf{pk}_0,\textsf{cpk},\mathbf{F}_{A},\{\mathbf{F}_{i}^2\}_{i=l+1}^{L} , {(q_i, w_i)}_{i=1}^{L}$ and proceeds as follows:
\begin{enumerate}
    \item $S_{AB}$ chooses a uniform random tape for server A and B.
    \item In the preprocessing phase:
    \begin{enumerate}
    \item $S_{AB}$, as the user, computes $\{\Tilde{ct}_{r_i}=\textsf{HE.Enc}(\textsf{pk}_0, \mathbf{\Tilde{r}}_{i})\}_{i=1}^{l+1}$ and sends it to server A where $\mathbf{\Tilde{r}}_{i} \leftarrow \mathbbm{Z}_{q}^{q_i}$. Then $S_{AB}$ receives the evaluated ciphertext from server A.
    \item With $S_{AB}$ as the evaluator of the garbled circuits for $i^{\text{th}}, i\in \{1,...,l\}$ non-linear layer, $S_{AB}$ receives ${\Tilde{C}}_i$ from server A and labels corresponding to random input generated by $S_{AB}$.
    \item Server B computes and sends the ciphertexts 
    \begin{align}
        {ct}_{F_i^2}&=\textsf{MPHE.Enc}(\textsf{cpk},\mathbf{F}_i^2)\nonumber \\ {ct}_{r_i^2}&=\textsf{MPHE.Enc}(\textsf{cpk}, \mathbf{r}_{i}^2)\nonumber\\{ct}_{s_i^2}&=\textsf{MPHE.Enc}(\textsf{cpk}, \mathbf{s}_{i}^2) \nonumber
    \end{align}
    to server A.
    \item $S_{AB}$, as server C, sets $\mathbf{\Tilde{F}}_i^3=\mathbf{0}$, then generates $\Tilde{\mathbf{r}}_{i}^3\leftarrow \mathbbm{Z}_{q}^{q_i}, \Tilde{\mathbf{s}}_{i}^3\leftarrow \mathbbm{Z}_{q}^{w_i}$ and encrypts them with $\textsf{cpk}$; During distribute decryption, server B receives the evaluated ciphertext from server A and server A computes $\widetilde{\textsf{share}}_i^A = (\mathbf{F}_i^2+\mathbf{\Tilde{F}}_i^3)(\mathbf{r}_{i}^1+\mathbf{r}_{i}^2+\Tilde{\mathbf{r}}_{i}^3)-\mathbf{s}_{i}^2-\Tilde{\mathbf{s}}_{i}^3$ 
    \end{enumerate}
    \item In the inference phase:
    \begin{enumerate}
        \item Preamble: $S_{AB}$, as the user, sends $-\mathbf{\Tilde{r}}_1$ to server A to evaluate the first linear layer. Then $S_{AB}$ receives the corresponding GC label.
        \item $S_{AB}$ evaluates ${\Tilde{C}}_i$ to obtain the output of garbled circuits and sends it to server A.
        \item Transition: $S_{AB}$, as server C, receives the output of first $l$ layers
        \item $S_{AB}$ obtains labels corresponding to  $\Tilde{\mathbf{s}}_{i}^3$ and $\Tilde{\mathbf{r}}_{i+1}^3$ from server B, computes the output of garbled circuits and sends it to server B
    \end{enumerate}
\end{enumerate}
\begin{theorem} \label{theo:server-a}
The view simulated by $S_{AB}$ is computaionally indistinguishable from the real view $v_{AB}$ when the adversary is corrupting server A and B.
\end{theorem}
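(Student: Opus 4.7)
The plan is to establish indistinguishability in the standard simulation paradigm via a hybrid argument that walks through the components of $v_{AB}$ one at a time and replaces each by its simulated counterpart, invoking at each step either the semantic security of the (MP)HE scheme (Proposition~\ref{prop-he-security} and Proposition~\ref{prop-mphe-security}), the privacy of the garbled-circuit construction, or an information-theoretic one-time-pad argument made possible because server C is honest.

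I would partition the elements of $v_{AB}$ into four classes. The first class consists of ciphertexts: user contributions $\{ct_{r_i}\}_{i=1}^{l+1}$ (under $\textsf{pk}_0$ for $i\le l$ and under $\textsf{cpk}$ for $i=l+1$), server-B contributions $\{ct_{F^2_i}, ct_{r^2_i}, ct_{s^2_i}\}$, server-C contributions $\{ct_{F^3_i}, ct_{r^3_i}, ct_{s^3_i}\}$, and the evaluated ciphertexts $\{ct_{\textsf{share}_i^A}\}_{i=l+1}^L$. For the user and server-B ciphertexts the simulator samples plaintexts from the same distributions as in the real execution, so the ciphertexts are identically distributed. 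For the server-C ciphertexts the simulator encrypts $\mathbf{0}$ in place of $\mathbf{F}_i^3$ while sampling $\tilde{\mathbf{r}}_i^3, \tilde{\mathbf{s}}_i^3$ from the same uniform distribution as the real values; indistinguishability follows by swapping one ciphertext at a time and invoking Proposition~\ref{prop-mphe-security}. Evaluated ciphertexts are deterministic functions of their ingredients under $\textsf{MPHE.Eval}$ and inherit indistinguishability.

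The second class is the set of masked intermediate values $\{\textsf{msk}_i\}$. For $i\le l+1$ the mask $\mathbf{r}_i$ is contributed solely by the user and never appears in $v_{AB}$ outside its ciphertext; conditioned on that ciphertext, $\mathbf{x}_i-\mathbf{r}_i$ is uniform on $\mathbbm{Z}_q^{q_i}$, matching the simulator's choice. For $i\ge l+2$ the mask contains the honest summand $\mathbf{r}_i^3$, so the same one-time-pad argument applies. The third class, the secret shares $\{\textsf{share}_i^A\}$ returned by $\textsf{MPHE.DisDec}$, is masked additively by $\mathbf{s}_i^3$, which is uniform on $\mathbbm{Z}_q^{w_i}$ and appears nowhere else in $v_{AB}$; hence the real share is uniform from the adversary's perspective and coincides in distribution with $\widetilde{\textsf{share}}_i^A$. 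The fourth class, the garbled circuits and the input labels received during $\textsf{GC.Transfer}$, is handled by invoking GC privacy: conditioned on the gate outputs, the garbled tables and the evaluator's labels admit an efficient simulator, which is exactly what $S_{AB}$ emulates when playing the role of the evaluator.

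The main obstacle will be arranging the hybrids so that the substitutions remain mutually consistent across interdependent values. Specifically, replacing $\mathbf{F}_i^3$ inside $ct_{F^3_i}$ by $\mathbf{0}$ only preserves the distribution of the downstream evaluated ciphertext $ct_{\textsf{share}_i^A}$ after the honest-server partial decryption produced inside $\textsf{MPHE.DisDec}$ has itself been replaced by a simulated partial decryption; this step relies on the distributed-decryption simulator of Mouchet et al., which can produce a valid partial decryption given only the output message and the adversary's secret keys. Composing the resulting chain of hybrids via the Composition Theorem for semi-honest security then yields $v_{AB} \stackrel{c}{\equiv} S_{AB}$ as claimed.
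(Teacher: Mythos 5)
Your proposal is correct and follows essentially the same route as the paper's proof: decompose the view into ciphertexts (handled by Propositions~\ref{prop-he-security} and~\ref{prop-mphe-security}), garbled circuits and labels (handled by GC security), and masked inputs and secret shares (handled by the statistical equivalence of the uniformly distributed honest randomness), then compose via the semi-honest Composition Theorem. Your explicit treatment of the $\textsf{MPHE.DisDec}$ partial-decryption simulation is a useful detail that the paper's proof leaves implicit, but it does not change the overall argument.
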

\begin{proof} We observe that the \textsf{PRE} and \textsf{INF} protocol can be privately reducible to the \textsf{HE}, \textsf{MPHE}, \textsf{GC} and other protocols. Therefore, the secuirty of \textsf{PRE} and \textsf{INF} protocol follow from the standalone security of each protocol by applying Composition Theorem for semi-honest model \cite{articleFoC}. The indistinguishability of ciphertext is guaranteed by the semantic security (Proposition~\ref{prop-he-security} and Proposition~\ref{prop-mphe-security}) of protocol \textsf{HE} and \textsf{MPHE}, and the indistinguishability of garbled circuits with labels is guaranteed by the security property of protocol \textsf{GC} \cite{article}. We only need to show the indistinguishability of the masked input and the the secret shares, then we can prove that \textsf{PRE} and \textsf{INF} preserve their security. Due to using different random values which share identical distribution hence are statistically equivalent, we get $\{\widetilde{\textsf{msk}}_{i}\}_{i=1}^{L}\overset{c}{\equiv} \{\textsf{msk}_{i}\}_{i=1}^{L},\{\widetilde{\textsf{share}}^{A}_{i}\}_{i=l+1}^L  \overset{c}{\equiv} \{\textsf{share}^{A}_{i}\}_{i=l+1}^{L}$. 
% By comparing the real view $v_{AB}$ with the view simulated by $S_{AB}$, we get
% \begin{enumerate}
%     % \item In the simulated world, server A uses the actual model parameter $\mathbf{M}_{A}$, real view 1. are indistinguishable from the simulated ones.
%     \item The relations $\Tilde{ct}_{S_{AB}} \overset{c}{\equiv} ct_{S_{AB}}$  follows the semantic security of BFV Scheme (Prop~\ref{prop-he-security} and Prop~\ref{prop-mphe-security}).
%     \item Following the security of the GC \cite{article}, $\{{\Tilde{C}}_i,\widetilde{\textsf{label}}_{i}\}_{i=1}^{L}\overset{c}{\equiv}\{\Tilde{C}_i,\textsf{label}_{i}\}_{i=1}^{L}$
%     \item Due to using different random values which share identical distribution hence are statistically equivalent, we get $\{\widetilde{\textsf{msk}}_{\mathbf{x}_i}\}_{i=1}^{L}\overset{c}{\equiv} \{\textsf{msk}_{\mathbf{x}_i}\}_{i=1}^{L},\{\widetilde{\textsf{share}}^{A}_{i}\}_{i=l+1}^L  \overset{c}{\equiv} \{\textsf{share}^{A}_{i}\}_{i=l+1}^{L}$
% \end{enumerate}
% Because of the independence of each term, the joint simulated view is computationally indistinguishable from the joint real view.
\end{proof}
The arguments above yields the following security property. 
\begin{prop}
    SECO introduced in Section~\ref{section:protocol-description} securely realizes $f_{pre}$ and $f_{inf}$ in the presence of semi-honest adversary controlling server A and B.
\end{prop}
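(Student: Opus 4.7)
The plan is to derive the proposition as a direct corollary of Theorem~\ref{theo:server-a} combined with the Composition Theorem for the semi-honest model. Concretely, the definition of secure realization in the simulation paradigm requires exhibiting, for every corrupted coalition, a PPT simulator whose output (given only the inputs and outputs that the ideal functionalities $f_{pre}$ and $f_{inf}$ expose to that coalition) is computationally indistinguishable from the real-world view of the coalition. Since the coalition here is $\{A,B\}$, the simulator $S_{AB}$ constructed in the proof of Theorem~\ref{theo:server-a} is exactly the object we need; the proposition therefore amounts to packaging that indistinguishability into the standard secure-realization statement for the two functionalities.

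First, I would argue the claim at the level of each sub-protocol. The \textsf{PRE} and \textsf{INF} protocols are sequential compositions of invocations of \textsf{HE}, \textsf{MPHE}, \textsf{GC}, \textsf{OT}, and additive secret sharing. Each of these building blocks has a standalone simulation-based proof of security against a passive adversary (Proposition~\ref{prop-he-security}, Proposition~\ref{prop-mphe-security}, and the references \cite{mouchet2021multiparty,article,eprint-2005-12523}). By Theorem~7.3.3 of \cite{articleFoC} (Composition Theorem for semi-honest adversaries), if each sub-protocol privately computes its own ideal functionality, then their sequential composition privately computes the composed functionality. Thus it suffices to show that the messages exchanged outside of the sub-protocol calls---namely the additive masks, the secret shares, and the final masked intermediate values---are themselves simulatable.

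Second, I would read off from the description of $S_{AB}$ that every component of $v_{AB}$ can be matched term-by-term by a simulated component. Ciphertexts under $\textsf{pk}_0$ or $\textsf{cpk}$ are indistinguishable from encryptions of arbitrary messages by the semantic security of \textsf{HE}/\textsf{MPHE}; garbled tables together with the labels corresponding to the evaluated inputs are indistinguishable from simulated garbled circuits by the privacy of \textsf{GC}; the masks $\{\mathbf{x}_i-\mathbf{r}_i\}$ and the shares $\{(\mathbf{F}_i^2+\mathbf{F}_i^3)\mathbf{r}_i-\mathbf{s}_i^2-\mathbf{s}_i^3\}$ are statistically uniform on $\mathbb{Z}_q$ because at least one term in each sum (the honest server C's contribution $\mathbf{r}_i^3$ or $\mathbf{s}_i^3$, or the honest user's $\mathbf{r}_1$) is a fresh uniform element unknown to the adversary. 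This last observation is the real content of the proof, since it is what prevents the adversary from distinguishing $S_{AB}$'s substitution of $\mathbf{\tilde F}_i^3=\mathbf{0}$ and fresh $\tilde{\mathbf{r}}_i^3,\tilde{\mathbf{s}}_i^3$ from the genuine values.

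The step I expect to require the most care is the hybrid argument that glues these together: one must walk from the real execution to the ideal one through a sequence of hybrids that swap ciphertexts, garbled circuits, and OT transcripts in an order consistent with the protocol's information flow, ensuring in particular that when \textsf{MPHE.DisDec} is invoked inside Algorithm~\ref{Pre-com-lin-L} the partial decryptions produced by the honest server C can be simulated given only $\widetilde{\textsf{share}}_i^A$. Once this hybrid chain is in place, combining it with the Composition Theorem yields $\{S_{AB}(\textsf{pk}_0,\textsf{cpk},\mathbf{F}_A,\{\mathbf{F}_i^2\}_{i=l+1}^L,\{(q_i,w_i)\}_{i=1}^L)\}\overset{c}{\equiv}\{v_{AB}\}$, which is precisely the statement that SECO securely realizes $f_{pre}$ and $f_{inf}$ against a passive adversary corrupting $\{A,B\}$.
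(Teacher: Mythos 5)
Your proposal is correct and follows essentially the same route as the paper: the proposition is obtained as a direct corollary of Theorem~\ref{theo:server-a}, whose proof invokes the Composition Theorem to reduce \textsf{PRE} and \textsf{INF} to the sub-protocols, then establishes indistinguishability term-by-term via semantic security of \textsf{HE}/\textsf{MPHE}, the privacy of \textsf{GC}, and the statistical equivalence of the masks and shares guaranteed by the honest parties' fresh randomness. Your additional remarks on the hybrid ordering and on simulating the partial decryptions inside \textsf{MPHE.DisDec} flesh out details the paper leaves implicit, but they do not change the argument's structure.
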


\subsubsection{The user corrupted}
We prove that the simulated view and the real view are computationally indistinguishable, thus information such as the model parameters which is beyond the view cannot be computed if the adversary corrupts the user.

The real view $v_U$ of the adversary corrupting the user includes:
\begin{enumerate}
\item The masked input $\{\textsf{msk}_{i}=\mathbf{x}_i-\mathbf{r}_{i}-\mathbf{d}_{i}\}_{i=1}^{l}$
\item The additive share $\{\textsf{share}_i^U=\mathbf{F}_i\mathbf{r}_i-\mathbf{s}_i\}_{i=1}^{l}$
\item Garbled circuit $\{\Tilde{C}_i\}_{i=1}^l$ and labels $\{\textsf{label}_{i}\}_{i=1}^{l}$ 
\end{enumerate}

We define a simulator $S_{U}$ that simulates the view of an adversary corrupting the user. $S_{U}$ is given $\mathbf{x}_1, \textsf{pk}_0,\textsf{cpk}$ and proceeds as follows:
\begin{enumerate}
    \item $S_{U}$ chooses a uniform random tape for the user.
    \item In the preprocessing phase:
    \begin{enumerate}
    \item $S_{U}$, as server A, receives $\{ct_{r_i}=\textsf{HE.Enc}(\textsf{pk}_0, \mathbf{{r}}_{i})\}_{i=1}^{l+1}$ with $\mathbf{{r}}_{i}\leftarrow\mathbbm{Z}_{q}^{q_i}$. With $\mathbf{\Tilde{F}}_i$ set to $\mathbf{0}$, the user receives the evaluated $\{ct_{s_i}=\textsf{HE.Enc}(\textsf{pk}_0, -\mathbf{\Tilde{s}}_{i})\}_{i=1}^{l+1}$ from $S_U$ for a randomly chosen $\mathbf{\Tilde{s}}_{i}$ from $\mathbbm{Z}_{q}^{w_i}$.
    \item With $S_{U}$ acting as the garbler of the garbled circuits for $i^{\text{th}}, i\in \{1,...,l\}$ non-linear layer, the user receives ${\Tilde{C}}_i$ from $S_U$ and labels corresponding to $\mathbf{{r}}_{i}$ and $\mathbf{\Tilde{s}}_i$. $S_{U}$ sets the output of the circuit to be a random value.
 
    \end{enumerate}
    \item In the inference phase:
    \begin{enumerate}
        \item Preamble: $S_{U}$, as server A, receives $\mathbf{x}_1-\mathbf{{r}}_1$ from the user
        \item $S_{U}$, as server A, sets the  $\mathbf{F}_i=0$ evaluates to obtain $\Tilde{\mathbf{{s}}}_i$ and sends the corresponding simulated labels to the user.
        \item The user evaluates ${\Tilde{C}}_i$ to obtain the output of garbled circuits which is set randomly, then sends it to $S_{U}$.
    \end{enumerate}
\end{enumerate}
\begin{theorem} 
The view simulated by $S_{U}$ is computaionally indistinguishable from the real view $v_{U}$ when the adversary is corrupting the user.
\end{theorem}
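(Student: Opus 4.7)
The plan is to mirror the structure of the proof of Theorem~\ref{theo:server-a}: I will invoke the Composition Theorem for the semi-honest model \cite{articleFoC} to privately reduce \textsf{PRE} and \textsf{INF} to their constituent sub-protocols (\textsf{HE}, \textsf{MPHE}, \textsf{GC}, and OT), and then establish the indistinguishability of each element of the user's view $v_U$ against the simulated view produced by $S_U$ on an item-by-item basis. Since ciphertext indistinguishability follows from Proposition~\ref{prop-he-security}, and garbled-circuit indistinguishability follows from the standalone security of \textsf{GC} \cite{article} together with OT security \cite{eprint-2005-12523}, the main technical content is to argue that the plaintext data revealed to the user, namely the shares $\{\textsf{share}_i^U\}_{i=1}^l$, the masked inputs $\{\textsf{msk}_i\}_{i=1}^l$, and the garbled-circuit outputs, have identical distributions in the real and simulated executions.

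First I would handle the additive shares. In the real execution, server A returns an encryption of $\mathbf{F}_i\mathbf{r}_i - \mathbf{s}_i$, whereas in the simulation $S_U$ sets $\mathbf{F}_i = \mathbf{0}$ and returns an encryption of $-\Tilde{\mathbf{s}}_i$. Although the user holds $\textsf{sk}_0$ and can therefore decrypt, the two plaintexts are distributionally identical because $\mathbf{s}_i$ and $\Tilde{\mathbf{s}}_i$ are each drawn uniformly from $\mathbbm{Z}_q^{w_i}$, making both $\mathbf{F}_i\mathbf{r}_i - \mathbf{s}_i$ and $-\Tilde{\mathbf{s}}_i$ uniform over $\mathbbm{Z}_q^{w_i}$. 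The same one-time-pad argument applies to the masked input $\mathbf{x}_i - \mathbf{r}_i - \mathbf{d}_i$ observed by the user: the password $\mathbf{d}_i$ held by server A is uniform and independent of the user's view, so the real masked input is uniform and matches the simulated one.

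For the garbled circuits $\{\Tilde{C}_i\}_{i=1}^l$ together with the labels $\{\textsf{label}_i\}_{i=1}^l$ corresponding to a single evaluation, I would invoke the simulation-based security of garbled circuits \cite{article}, which guarantees that a garbled circuit together with labels for one input assignment reveals nothing beyond the circuit output. Since $S_U$ fixes that output to a uniformly random value matching the distribution of the real output (which is itself uniform thanks to the $\mathbf{d}_{i+1}$ mask), the joint distribution of $(\Tilde{C}_i, \textsf{label}_i, \text{output})$ in the simulation is computationally indistinguishable from the real view. Consistency across layers is automatic because the random circuit output is precisely what feeds into the next layer as $\textsf{msk}_{i+1}$, so the chain structure of the execution is respected.

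The hard part of the argument, in my view, is tracking the joint distribution of all released values across layers rather than each item in isolation: replacing $\mathbf{F}_i$ by $\mathbf{0}$ in the simulation must not create any observable correlation between consecutive layer outputs or between shares and masked inputs. This is resolved by observing that $\{\mathbf{s}_i, \mathbf{r}_i, \mathbf{d}_i\}$ act as independent one-time pads with full entropy at each layer, so the layered view factorises identically in both worlds. Combining the item-by-item indistinguishability with the Composition Theorem then yields $\Tilde{v}_U \overset{c}{\equiv} v_U$, concluding that SECO securely realizes $f_{pre}$ and $f_{inf}$ against a semi-honest corrupted user.
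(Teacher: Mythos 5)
Your proposal is correct and follows essentially the same route as the paper's (very terse) proof: apply the Composition Theorem, dispatch ciphertexts via HE semantic security and garbled circuits via standalone GC security, and reduce the remaining plaintext items ($\textsf{share}_i^U$, $\textsf{msk}_i$, circuit outputs) to the statistical equivalence of uniformly distributed one-time-pad masks $\mathbf{s}_i$, $\mathbf{r}_i$, $\mathbf{d}_i$. Your explicit treatment of the joint distribution across layers is a welcome elaboration of a step the paper leaves implicit, but it is not a different argument.
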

\begin{proof}
Similar to the proof of Theorem~\ref{theo:server-a}, the views are computationally indistinguishable based on the security properties of \textsf{HE}, \textsf{MPHE}, \textsf{GC} and the use of randomness that are statistically equivalent.
\end{proof}
The arguments above yields the following security property. 
\begin{prop}
    SECO introduced in Section~\ref{section:protocol-description} securely realizes $f_{pre}$ and $f_{inf}$ in the presence of semi-honest adversary controlling the user.
\end{prop}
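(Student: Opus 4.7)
The plan is to deduce the proposition as an immediate corollary of the preceding theorem together with the standard simulation-paradigm definition of security in the semi-honest model. Securely realizing an ideal functionality $f$ against a semi-honest adversary corrupting party $P$ is, by definition, equivalent to exhibiting a PPT simulator that, given only $P$'s input and the ideal output of $f$, produces a transcript computationally indistinguishable from $P$'s real view. The theorem above already constructs such a simulator $S_U$ and establishes $S_U(\mathbf{x}_1,\textsf{pk}_0,\textsf{cpk},\text{output}) \overset{c}{\equiv} v_U$, so the remaining task is to package this into the security statement for both $f_{pre}$ and $f_{inf}$ simultaneously.

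First I would split the composed execution \textsf{PRE} followed by \textsf{INF} into its constituent sub-protocols and invoke the Composition Theorem for the semi-honest model~\cite{articleFoC}, reducing the security of the composition to the standalone security of each building block as accessed by a corrupted user. The building blocks that the user actually touches are (i) HE encryptions of its own masks $\mathbf{r}_i$, (ii) evaluated ciphertexts returned by server A, (iii) garbled tables and wire labels for the first $l$ activation layers, and (iv) the masked intermediate values $\textsf{msk}_i$ and additive shares $\textsf{share}_i^U$. For each I would verify that the distribution produced by $S_U$ matches the real one: HE-encrypted objects by Proposition~\ref{prop-he-security}, any MPHE-encrypted objects handed to the user by Proposition~\ref{prop-mphe-security}, garbled-circuit/label pairs by the standard GC simulation security lemma of~\cite{article}, and the $\textsf{msk}_i$, $\textsf{share}_i^U$ values by the observation that they are one-time-pad masked by fresh uniform $\mathbf{r}_i$, $\mathbf{s}_i$, so their marginal distributions are already uniform and identical to the simulator's random draws.

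The main obstacle will be handling consistency across the sequential composition rather than any single sub-protocol in isolation. In the real execution the garbled-circuit output at layer $i$ is $\mathbf{x}_{i+1} - \mathbf{r}_{i+1}$, and this same value is reused as the masked input that enters the linear evaluation of layer $i+1$; the joint distribution of all layers must therefore be simulated coherently, not just layer by layer. I would handle this through a hybrid argument indexed by pairs $(i, \textsf{component})$, where each hop replaces one real object (a ciphertext, a garbled circuit with its labels, or a masking value) by its simulated counterpart and is justified by exactly one of the indistinguishability statements mentioned above. Since the number of such hybrids is polynomial in $l$, the cumulative distinguishing advantage remains negligible, and the terminal hybrid coincides with the output of $S_U$.

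To close the proof I would observe that since $S_U$ needs only $\mathbf{x}_1$, the public keys, and the ideal output, any efficiently computable function of $v_U$ is also efficiently computable from the ideal output alone. In particular, the honest parties' inputs, namely the model parameters $\{\mathbf{F}_i\}_{i=1}^{l}$, the random masks $\{\mathbf{s}_i\}_{i=1}^{l}$, and the entire description of $\mathbf{F}_{BC}$ including its layer shapes and count, remain hidden from the adversary, which is exactly the user-side privacy requirement stated in Section~\ref{section:privacy-goals}.
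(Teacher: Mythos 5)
Your proposal follows essentially the same route as the paper's proof: construct the simulator $S_U$, reduce to the standalone security of \textsf{HE}, \textsf{MPHE}, and \textsf{GC} via the Composition Theorem, and argue the remaining plaintext values are distributed identically to the simulator's random draws; your explicit hybrid argument over $(i,\textsf{component})$ merely spells out detail the paper leaves implicit. One caveat: in the user-corrupted case the user itself generates and therefore knows $\mathbf{r}_i$, so the uniformity of $\textsf{msk}_i=\mathbf{x}_i-\mathbf{r}_i-\mathbf{d}_i$ and of $\textsf{share}_i^U=\mathbf{F}_i\mathbf{r}_i-\mathbf{s}_i$ must be derived from server A's secret one-time pads $\mathbf{d}_i$ and $\mathbf{s}_i$, not from $\mathbf{r}_i$ as your one-time-pad step suggests.
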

\section{Experiments}
In this section, we experimentally evaluate the performance of SECO and present our experimental results comparing SECO and other protocols. Currently, SECO supports the deployment of CNN models. 
\subsection{Implementation}
Our implementation is based on SEAL library \cite{sealcrypto} and DELPHI's open source code \footnote{https://github.com/mc2-project/delphi}. We first implemented the operations for multiparty key-generation and distribute-decryption for BFV scheme \cite{mouchet2021multiparty} in SEAL. Then we modified DELPHI's implementation to support the collaborative inference involving multiple parties as described in Section~\ref{section:protocol-description}. { For convolutional functions, we employ modulus-switching to reduce the ciphertext size as in Cryptflow2 \cite{Rathee_2020}}.
% , including extending encrypted neural network operations based on HE to be based on MPHE and extending garbled circuit scheme to a 3-party setting. 
We use the same BFV security parameters as DELPHI uses, with 8192 as the degree of polynomial modulus and 2061584302081 as plaintext modulus.
\begin{figure}[h]
\flushleft  
\subfigure[ResNet32 online time]{
\label{fig:online-time}
\includegraphics[width=8.5cm]{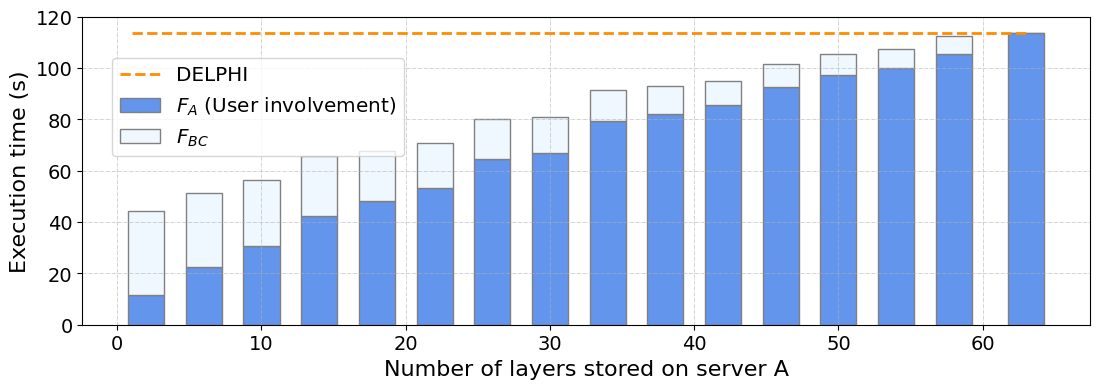}}
\subfigure[ResNet32 preprocessing time]{
\label{fig:pre-time}
\includegraphics[width=8.5cm]{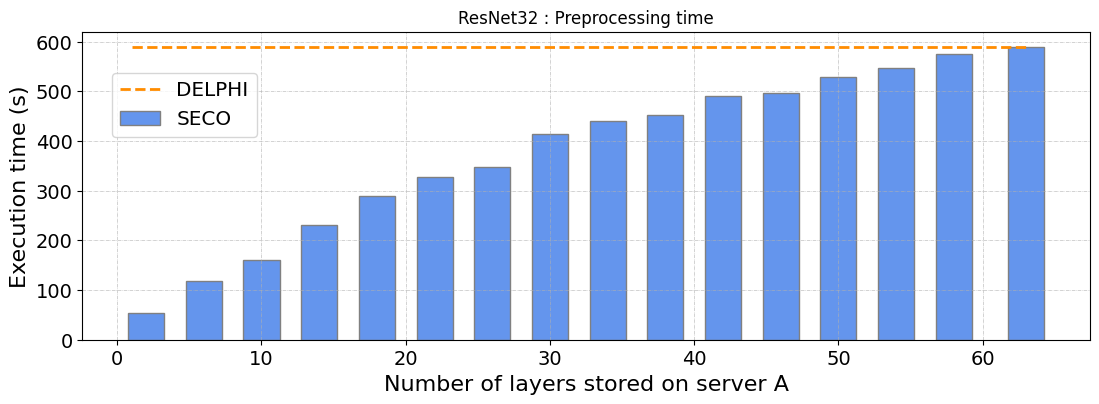}}
\caption{Comparison of SECO and DELPHI on execution time of ResNet32}
\label{fig:time}
\end{figure}

\begin{figure}[h]
\flushleft  
\subfigure[ResNet32 online communication]{
\label{fig:online-comm}
\includegraphics[width=8.5cm]{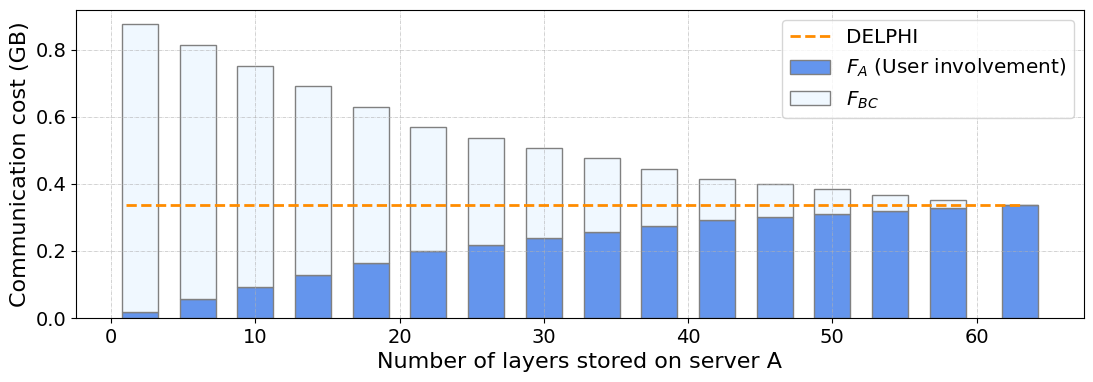}}
\subfigure[ResNet32 preproccessing communication]{
\label{fig:pre-comm}
\includegraphics[width=8.5cm]{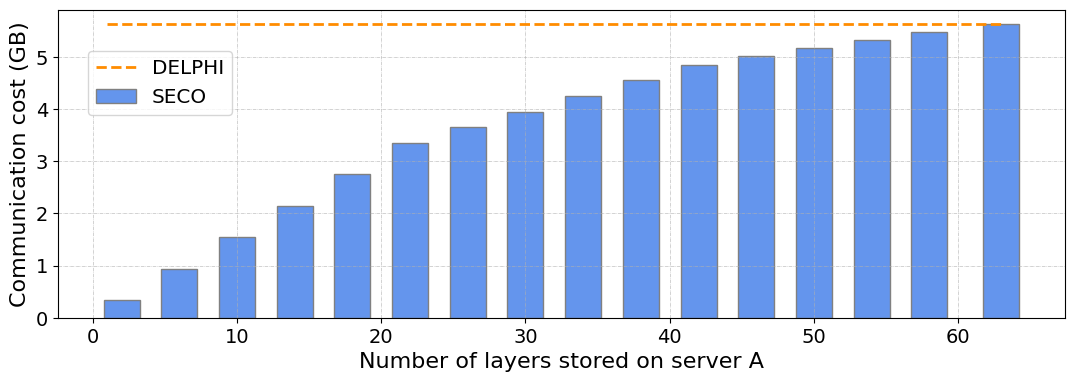}}
\caption{Comparison of SECO and DELPHI on communication cost of ResNet32}
\label{fig:communication}
\end{figure}

% \begin{figure}
% \centering  
% \subfigure[ResNet32 online communication]{
% \label{fig:online-comm}
% \includegraphics[width=7.8cm]{compare1.png}}
% \subfigure[ResNet32 Preproccessing communication]{
% \label{fig:pre-comm}
% \includegraphics[width=7.8cm]{compare2.png}}
% \caption{Comparison of SECO and DELPHI on communication cost of ResNet32}
% \label{fig:communication}
% \end{figure}

\subsection{Setup}
{We run our benchmarks in the WLAN setting with four instances in Australia or North America.}
All the instances have Intel Xeon Processor running at 2.3 GHz with 32GB of RAM. Each experiment is conducted 5 times, and the results presented are the average of these runs.
% \begin{table*}[]
% \centering
% \caption{Execution time and Communication cost of Single ReLU Operation}\label{tab:seco-perf-relu}
% {{
% \begin{tabular}{ccccc}
% \toprule
%      & \multicolumn{2}{c}{Execution Time (ms)} & \multicolumn{2}{c}{Communication Cost (byte)} \\
%      & Preprocessing       & Inference         & Preprocessing           & Inference            \\\midrule
% ReLU & 0.057343±0.00607    & 0.15066±0.00119   & 116442                  & 4378         \\
% \bottomrule 
% \end{tabular}}}
% \end{table*}

% \begin{figure}
% \centering  
% \subfigure[MiniONN preprocessing communication]{
% \label{fig:minionn-pre-com}
% \includegraphics[width=7.8cm]{minionn_pre_com.pdf}}
% \subfigure[MiniONN inference communication]{
% \label{fig:minionn-inf-com}
% \includegraphics[width=7.8cm]{minionn_inf_com.pdf}}
% \caption{Comparison of SECO and DELPHI on communication cost of MiniONN}
% \label{fig:minionn-com}
% \end{figure}

% \begin{figure}
% \centering  
% \subfigure[ResNet32 preprocessing communication]{
% \label{fig:resnet32-pre-com}
% \includegraphics[width=7.8cm]{resnet32_pre_com.pdf}}
% \subfigure[ResNet32 inference communication]{
% \label{fig:resnet32-inf-com}
% \includegraphics[width=7.8cm]{resnet32_inf_com.pdf}}
% \caption{Comparison of SECO and DELPHI on communication cost of ResNet32}
% \label{fig:resnet32-com}
% \end{figure}
\subsection{Performance}
\subsubsection{Compare with HE-MPC-based Protocols}
% {\color{blue}
We experimentally evaluate SECO in terms of execution time and communication cost for preprocessing and inference phase with the deployed model split at different layers. We present the comparison results of SECO, DELPHI \cite{Delphi}, and DELPHI-3. DELPHI-3 is a straightforward extension of DELPHI to 3PC setting. This extension employs MPHE for processing linear functions, while utilizing Garbled Circuits (GC) for non-linear operations. A key distinction between DELPHI-3 and SECO, lies in the allocation of roles within the GC scheme. In SECO, two remote servers are designated as the garbler and the evaluator in the GC process. In DELPHI-3, the gateway server assumes the role of the evaluator, while a remote server functions as the garbler. This will lead to different performance of SECO and DELPHI-3 in only online inference phase. 
% We provide a detailed description of DELPHI-3 and its comparison with SECO in Appendix~\ref{delphi3}. 
We evaluate three schemes on ResNet32 \cite{he2016deep} (with 62 layers in total). 
\paragraph{User's experience}
Figure~\ref{fig:online-time} shows the comparison of the online execution time. As illustrated in the figure, the blue bars indicate the duration for which the user must remain connected and actively process the computation while the light blue bars represent the time required to process the model on remote servers. The total height of the bars in the figure represents the user's waiting time, spanning from the start of the query to the receipt of the prediction result. We can tell that the fewer layers deployed on gateway server A, the less execution time of SECO in the preprocessing phase. When all the layers of the model are on server A, SECO is reduced to DELPHI and only the user and server A are involved in the computation. When the number of layers on server A is reduced to two, SECO offers the most efficient experience in terms of both the shortest waiting and participation time for the user comparing to DELPHI, thereby potentially minimizing the user's computational burden. We note that SECO can operate on a device with just 2 GB of memory as the user device (with less than 10 layers of ResNet32 on the gateway server), whereas DELPHI cannot support such a device, making meaningful comparison impossible. This capability allows SECO to utilize lightweight devices, such as the smart watch, as user devices. Depending on the numbers of layers on server A, SECO's results are between $2.6\times-1\times$ faster than DELPHI in terms of user's waiting time. Accordingly, SECO's efficiency ranges from $9.9\times-1\times$ faster than DELPHI in terms of user's participation time.

Figure~\ref{fig:pre-time} shows the comparison of the preprocessing execution time, revealing a similar trend as the observation in Figure~\ref{fig:online-time}. Depending on the numbers of layers on server A, SECO's results are between $11\times-1\times$ faster than DELPHI in terms of user's preprocessing time. We note that SECO's preprocessing only involves the preprocessing of $\mathbf{F}_{A}$. The three server pre-compute the secret shares for $\mathbf{F}_{BC}$ before the user appears in the system.

As illustrated in Figure~\ref{fig:online-comm} and Figure~\ref{fig:pre-comm}, SECO demonstrates up to $18.5\times$ more efficient compared to DELPHI in terms of the user's online communication cost, additionally, it is up to $16.6\times$ more efficient than DELPHI regarding the user's preprocessing communication cost. Even though SECO's total communication cost in the online inference phase is higher than DELPHI's, SECO reduces the user's cost. Figure~\ref{fig:online-comm} shows that SECO can provide a tradeoff in communication load between user and server nodes, which might be desirable to service provider.

\paragraph{Effect of schemes with different assignments of servers in GC} Figure~\ref{fig:compare1} and Figure~\ref{fig:compare2} depict the online performance of SECO, DELPHI, and DELPHI-3. We can tell from figures, the time and the cost for processing $\mathbf{F}_{A}$ are the same for SECO and DELPHI-3 as the 2PC protocol in both cases is derived from DELPHI. However, the time and cost for processing $\mathbf{F}_{BC}$ are significantly reduced in SECO. The inefficiency in DELPHI-3 arises because the evaluator of the garbled circuit obtains labels for server A's share during $\mathbf{F}_{BC}$ pre-computation, making server A's participation in the inference phase unnecessary. In contrast, SECO designates server C as the evaluator, involving only servers B and C in the inference phase. Our approach streamlines the process, avoiding the need for three-server involvement as in DELPHI-3.

\begin{figure}
\centering  
\subfigure[ResNet32 total online execution time]{
\label{fig:compare1}
\includegraphics[width=5.5cm]{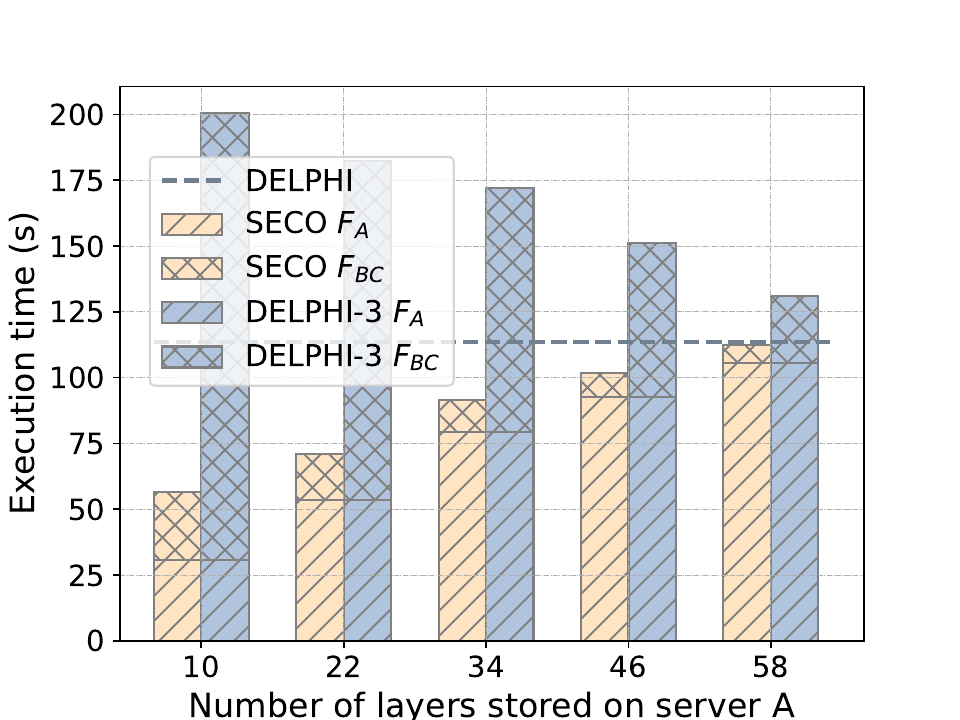}}
\subfigure[ResNet32 total online communication cost]{
\label{fig:compare2}
\includegraphics[width=5.5cm]{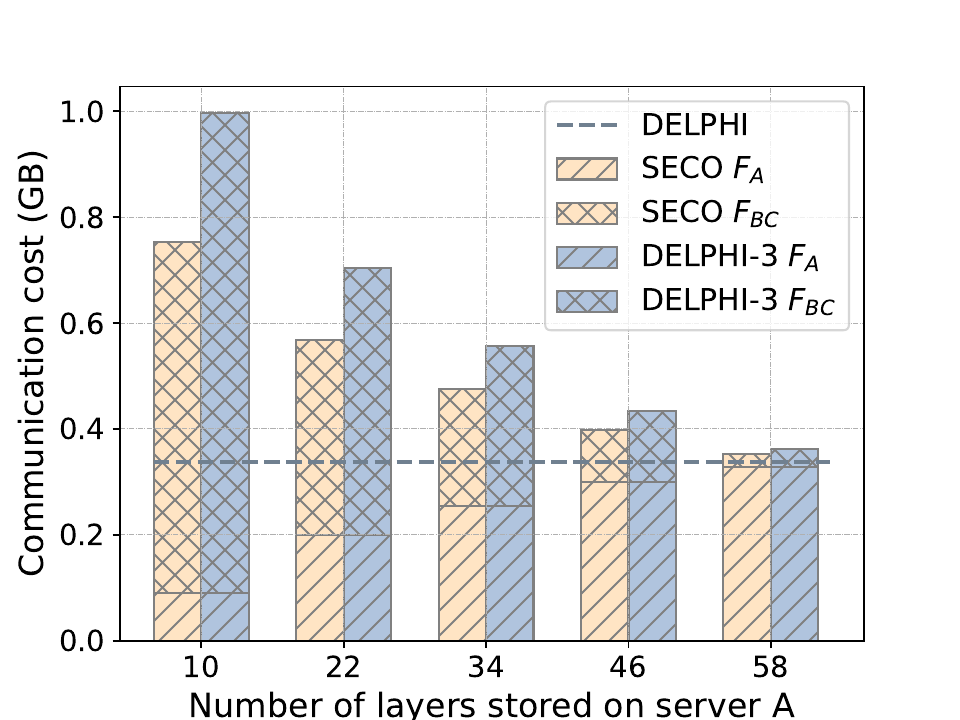}}
\caption{Comparison of SECO and DELPHI-3 on execution time and communication cost of ResNet32}
% \label{fig:communication}
\end{figure}

\subsubsection{Compare with MPC-based protocols}

In this section, we compare SECO with pure MPC-based protocols. To make a fair comparison, we deploy all the layers of the model on remote servers in SECO. So the user can send the masked input and receive the prediction result without further interaction like in pure MPC-based protocols. Figure~\ref{fig:secofalconuser} shows the communication cost for the user in protocols including SECO, Falcon \cite{wagh2020falcon}, and SecureNN \cite{wagh2019securenn}. Although there is only a one-way transmission from user to servers, due to the use of replicated secret shares, the user in Falcon needs to duplicates each of the three input shares, and the user must transmit two distinct shares to the respective server. This process leads to a communication cost that is $6\times$ greater compared to that in SECO. SecureNN uses a 2-out-of-2 secret sharing scheme between two servers, and the user needs to deliver each of two shares to the respective servers. It differs from the 2-out-of-2 secret sharing used between a user and a server in SECO. This process results in the communication cost $2\times$ greater compared to SECO. 
Furthermore, we assess the execution time starting from when the user transmits the input until the user receives the prediction result. In Table~\ref{tab:executiontime}, we evaluate 3 protocols on MiniONN and LeNet. The outperformance of SECO is attributed to the online inference only involving two (remote) servers while other protocols are executed between three servers. Although SECO is faster than other MPC-based protocols, it incurs a communication cost between servers that is 36 to 48 times higher than that of Falcon. We remind that SECO operates under a different system and threat model. SECO can protect the user's input even when the adversary corrupts two of the three servers while other protocols assume non-collusion between servers. The increase in communication cost can be justified by SECO's emphasis on enhancing the system's robustness and trustworthiness for the user.
\begin{table}[]
\centering
% \label{tab:executiontime}
\caption{Comparison of execution time and communication cost}
\label{tab:executiontime}
{\footnotesize
\begin{tabular}{ccccc}
\toprule 
         & \multicolumn{2}{c}{MiniONN} & \multicolumn{2}{c}{LeNet} \\
         \midrule
         & Time (s) & Comm. cost (MB) &Time (s) & Comm. cost \\
         \midrule
SecureNN & 183.493  & 177.264 &  -    &     -  \\
Falcon   & 8.426  & 1.097 &   9.303   & 1.611 \\
SECO     & 4.776  & 50.479 &   4.862   & 58.888\\
\bottomrule

\end{tabular}}
\end{table}

\begin{figure}[h]
    \centering
    \includegraphics[scale=0.38]{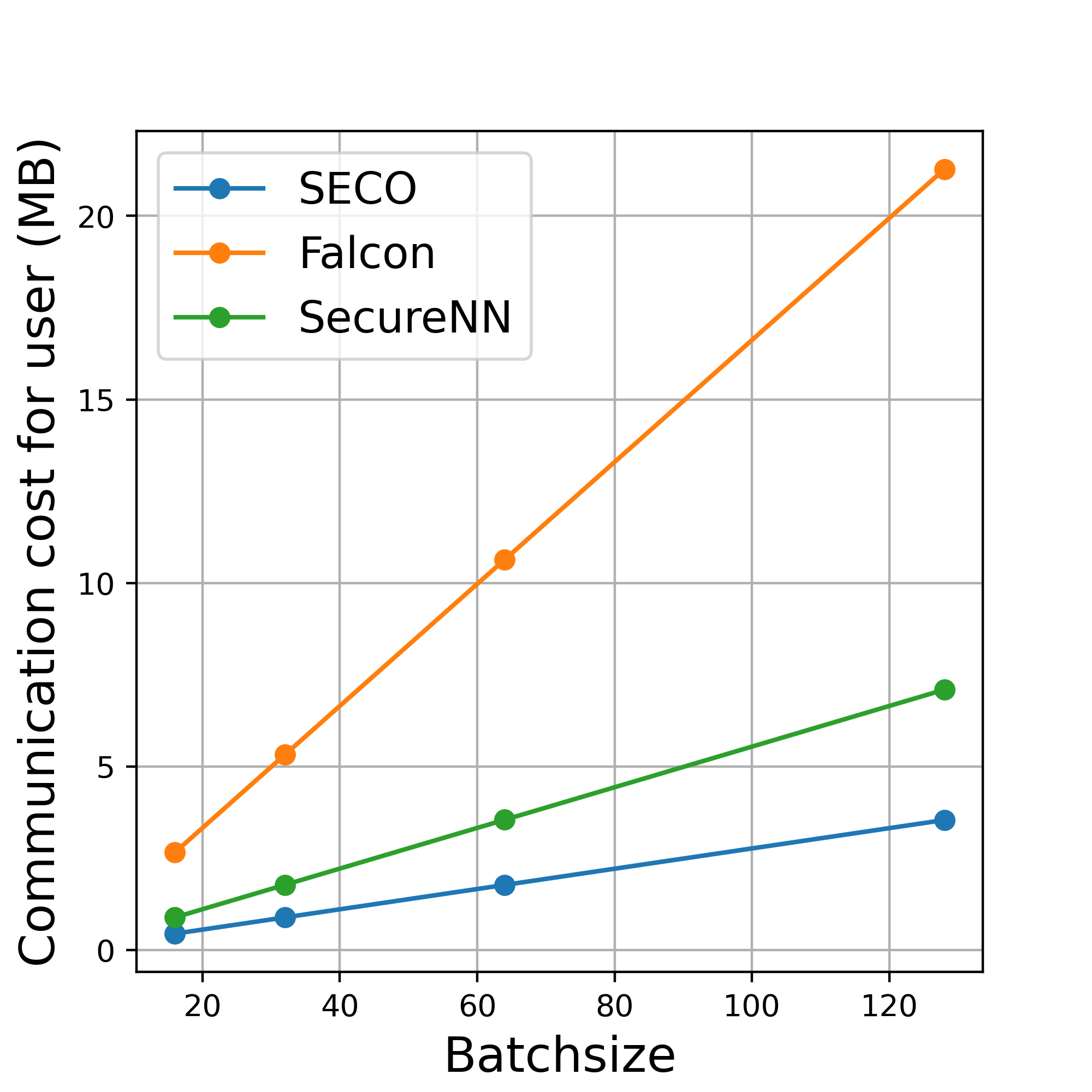} 
    \caption{User's communication cost comparison}
    \label{fig:secofalconuser}
\end{figure}
\section{Conclusion}
In this paper, we design and implement SECO, a novel hybrid HE-MPC-based secure inference protocol with model splitting in a multi-server hierarchy setting. SECO ensures the confidentiality of the input data withstanding passive adversaries corrupting up to 2 servers, and also protects the model parameters from the user. Furthermore, SECO hides more information about the model architecture than other HE-MPC-based protocols and pure MPC-based protocols. In the experiments, SECO is shown to minimize the user device's obligation in computation and communication, making it applicable to devices with limited resources.

% conference papers do not normally have an appendix

% use section* for acknowledgment
% \section*{Acknowledgment}

% The authors would like to thank...

\bibliographystyle{ieeetr}
\bibliography{bare_conf}
\clearpage
\appendices
\section{Notation}
Table~\ref{tab:tab1} provides the summary of the symbols used throughout this work.
\begin{table*}
\centering
\caption{Notation Summary}\label{tab:tab1}
\begin{tabular}{c|c}
\toprule
Symbol & Description \\
\midrule
$j$ & Party index (0 for the user, 1 for server A, 2 for server B, 3 for server C)\\
% $\leftarrow$& uniform sampling\\
$\mathbbm{Z}_{q}$ & $[-\frac{q}{2},\frac{q}{2})$\\
$R_t$ & Plaintext space for HE scheme\\
$R_q$ & Ciphertext space for HE scheme \\
$L$ & Total number of layers of the model  \\
$l$ & The number of layers stored on server A \\
$q_{i}$ & Input size of $i^{\text{th}}$ layer\\
$w_{i}$ & Output size of $i^{\text{th}}$ layer\\
$\mathbf{F}_{A}$ & The partial model on server A\\
$\mathbf{F}_{BC}$ & The partial model on server B and C\\
$\mathbf{F}_{i}$ & Model parameter of the $i^{\text{th}}$ layer\\
$\mathbf{F}_{i}^{j}$ & Model parameter of the $i^{\text{th}}$ layer on Party $j$\\
$\mathbf{x}_{i}$ & The input of the model's $i^{\text{th}}$ layer\\
$\hat{\mathbf{y}}$ & The prediction result\\
$\mathbf{r}_{i}$ & Randomness masking the input of $i^{\text{th}}$ layer\\
$\mathbf{s}_{i}$ & Randomness masking the output of $i^{\text{th}}$ layer\\
$(\textsf{sk}_{j},\textsf{pk}_{j})$ & Key pair generated by party $j$  \\
$\textsf{cpk}$& Common public key\\
$ct_x$ & Ciphertext of message $x$ \\
$C_i$ & Circuit for computing the $i^{\textsf{th}}$ ReLU layer \\
$\Tilde{C}_i$ &  Garbled circuit correspongind to $C_i$\\
$\textsf{label}_x$ & Garbled circuit input labels corresponding to value $x$ \\

\bottomrule
\end{tabular}
\end{table*}
\section{Security Proof} \label{sec:sec-proof}
We provide the security proof for other cases when server B and C are corrupted or server A and C are corrupted.
\paragraph{Server B, C corrupted}
The real view $v_{BC}$ of the adversary corrupting server B and C includes:
\begin{enumerate}
    \item Ciphertexts set $ct_{S_{BC}} =\{\{ct_{r_i^2},ct_{r_i^3}\}_{i=l+2}^{L},\\
    \{ct_{F_i^2},ct_{F_i^3},ct_{s_i^2},ct_{s_i^2}\}_{i=l+2}^{L},\{ct_{\textsf{share}_i^A}\}_{i=l+1}^L\}$
    \item The masked input of  $\textsf{msk}_{i}=\mathbf{x}_i-\mathbf{r}_{i}^1, i\in \{l+2,...,L\}$ 
    \item Garbled circuit $\{\Tilde{C}_i\}_{i=l+1}^L$ and $\{\textsf{label}_{i}\}_{i=l+1}^{L}$ obtained by Server B and C
\end{enumerate}
% If server B,C are corrupted, they can recover the intermediate result $\mathbf{x}_{i}, i\in \{l+1,...,L\}$ by providing the randomness respectively. However, the adversary cannot execute the white-box model inversion attack when the intermediate prediction result is present but part of its corresponding model ($\mathbf{M}_A$) is missing.
We define a real-world simulator $S_{BC}$ that simulates the view of an adversary corrupting server B and C. $S_{BC}$ is given $ \{\mathbf{F}_{i}^2, \mathbf{F}_{i}^3\}_{i=l+1}^{L}, \textsf{pk}_0$,
$ \textsf{cpk},\{(q_i, w_i)\}_{i=l+1}^{L},\textsf{msk}_{{l+1}}$ and proceeds as follows:
\begin{enumerate}
    \item $S_{BC}$ chooses a uniform random tape for server B and C.
    \item In the preprocessing phase:
    \begin{enumerate}
    \item $S_{BC}$, as server A, receives ciphertexts 
        $\textsf{MPHE.Enc}(\textsf{cpk}, \mathbf{F}_i^j),\\
    \textsf{MPHE.Enc}(\textsf{cpk}, \mathbf{r}_{i}^j), 
    \textsf{MPHE.Enc}(\textsf{cpk}, \mathbf{s}_{i}^j) $
    from server B and C.
    \item $S_{BC}$ generates $\Tilde{\mathbf{r}}_{i}^1\leftarrow \mathbbm{Z}_{Q}^{q_i}$ and computes $\widetilde{\textsf{share}}_i^A = (\mathbf{F}_i^3+\mathbf{{F}}_i^2)(\Tilde{\mathbf{r}}_{i}^1+\mathbf{r}_{i}^2+\mathbf{r}_{i}^3)-\mathbf{s}_{i}^3-{\mathbf{s}}_{i}^2$
    \item Server B sends the garbled circuit $\Tilde{C}_i$ and labels for inputs from server B to server C; $S_{BC}$ obtains the labels from server B and forwards it to server C
    \end{enumerate}
    \item In the inference phase:
    \begin{enumerate}
        \item Transition: $S_{BC}$ generates a random value as the output of the first $l$ layers $\textsf{msk}_{{l+1}}$ and sends it to server B and C
        % \item $S_{BC}$ sends $\mathbf{\Tilde{x}_{l+1}}$ to the ideal $\mathbf{M}_{BC}$ inference functionality and obtains each layer's output $\{\mathbf{\Tilde{x}_{i}}\}_{i=l+1}^{L+1}$
        % \item $S_{BC}$ obtains the labels from server B and C
        % \item $S_{BC}$ evaluates ${\Tilde{C}}_i$ and sends the output to server B and C
    \end{enumerate}
\end{enumerate}
\begin{theorem}
The view simulated by $S_{BC}$ is computaionally indistinguishable from the real view $v_{BC}$ when the adversary corrupting server B and C.
\end{theorem}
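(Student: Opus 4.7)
The plan is to mirror the structure used for Theorem~\ref{theo:server-a} and apply the Composition Theorem for the semi-honest model, reducing indistinguishability of $v_{BC}$ and the output of $S_{BC}$ to the standalone security of the underlying cryptographic building blocks plus statistical equivalence of the random masks. I would partition the view $v_{BC}$ into three syntactic types --- (i) the ciphertext set $ct_{S_{BC}}$, (ii) the garbled circuits $\{\Tilde{C}_i\}_{i=l+1}^L$ together with the labels $\{\textsf{label}_i\}_{i=l+1}^L$, and (iii) the plaintext observations, namely the transition mask $\textsf{msk}_{l+1}=\mathbf{x}_{l+1}-\mathbf{r}_{l+1}$ and the shares $\textsf{share}_i^A=(\mathbf{F}_i^2+\mathbf{F}_i^3)\mathbf{r}_i-\mathbf{s}_i^2-\mathbf{s}_i^3$ --- and argue each component separately.

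For (i), the only secret keys that the adversary controlling servers B and C possesses are $\textsf{sk}_2$ and $\textsf{sk}_3$, while the common key $\textsf{csk}$ aggregates also the honest server A's $\textsf{sk}_1$. Proposition~\ref{prop-mphe-security} (MPHE semantic security against up to $N-1$ colluded parties) therefore applies, and every MPHE ciphertext in $ct_{S_{BC}}$ is computationally indistinguishable from an encryption of any other message of the same length; in particular, ciphertexts encrypting true values $\mathbf{F}_i^j,\mathbf{r}_i^j,\mathbf{s}_i^j$ produced by the real protocol are indistinguishable from the ones $S_{BC}$ generates with the same plaintexts, and the partial-decryption messages exchanged during $\textsf{MPHE.DisDec}$ are simulated exactly as in \cite{mouchet2021multiparty}. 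For (ii), the standalone security of the multiparty garbled-circuit construction (following \cite{article}) guarantees that $\Tilde{C}_i$ together with the single set of labels for the actual inputs reveals nothing beyond the circuit's output, so the garbled objects forwarded by $S_{BC}$ on behalf of server A are indistinguishable from the real ones.

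The only genuinely non-syntactic step concerns (iii), and this is where I expect the main technical care. The real $\textsf{msk}_{l+1}=\mathbf{x}_{l+1}-\mathbf{r}_{l+1}$ is masked by the user-supplied randomness $\mathbf{r}_{l+1}\in\mathbbm{Z}_q^{q_{l+1}}$, which is uniformly distributed and never revealed to servers B and C in either the honest transcript or in $S_{BC}$'s simulation; hence $\textsf{msk}_{l+1}$ is uniform in $\mathbbm{Z}_q^{q_{l+1}}$ and is perfectly (not merely computationally) indistinguishable from the uniform value that $S_{BC}$ generates. For the subsequent shares $\textsf{share}_i^A$ with $i\in\{l+2,\dots,L\}$, the real value contains $\mathbf{r}_i^1$ contributed by the honest server A, which the adversary never sees in plaintext; the simulator replaces it with $\Tilde{\mathbf{r}}_i^1\leftarrow\mathbbm{Z}_q^{q_i}$ drawn from the identical distribution, so the resulting shares are statistically equivalent. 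Combining these three arguments via the standard hybrid and Composition Theorem for the semi-honest setting (Theorem 7.3.3 in \cite{articleFoC}), we conclude $\{S_{BC}(\mathbf{F}_i^2,\mathbf{F}_i^3,\textsf{pk}_0,\textsf{cpk},\ldots)\}\overset{c}{\equiv}\{v_{BC}\}$, which is exactly the claim. The subtle point to watch is ensuring that the simulated partial decryptions produced by $S_{BC}$ playing server A in $\textsf{MPHE.DisDec}$ are consistent with the (indistinguishable) ciphertexts and carry correctly distributed smudging noise, so that the distribute-decryption transcript does not inadvertently distinguish the two worlds; this follows from the analysis of $\textsf{MPHE.Reconstruct}$ in \cite{mouchet2021multiparty}, which I would cite rather than re-derive.
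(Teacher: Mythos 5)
Your proposal is correct and follows essentially the same route as the paper: the paper's proof likewise reduces the claim to the semantic security of \textsf{HE}/\textsf{MPHE}, the security of the garbled-circuit scheme, and the statistical equivalence of the real and simulated random masks (here the user's $\mathbf{r}_{l+1}$ for the transition and server A's $\mathbf{r}_i^1$ for $i\ge l+2$), combined via the Composition Theorem for the semi-honest model. Your write-up is simply a more explicit elaboration of the paper's one-line argument, including the correctly flagged reliance on \cite{mouchet2021multiparty} for the distributed-decryption transcript.
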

\begin{proof}
Similar to the proof of Theorem~\ref{theo:server-a}, the views are computationally indistinguishable based on the security properties of \textsf{HE}, \textsf{MPHE}, \textsf{GC}, and the use of randomness that are statistically equivalent.
\end{proof}
The arguments above yields the following security property. 
\begin{prop}
    SECO introduced in Section~\ref{section:protocol-description} securely realizes $f_{pre}$ and $f_{inf}$ in the presence of semi-honest adversary controlling server B and C.
\end{prop}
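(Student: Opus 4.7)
The plan is to derive the proposition essentially as a corollary of the preceding theorem combined with the Composition Theorem for the semi-honest model. The preceding theorem has already exhibited a simulator $S_{BC}$ whose transcript is computationally indistinguishable from the real view $v_{BC}$, so what remains is to translate that view-indistinguishability into the formal statement that SECO securely realizes the ideal functionalities $f_{pre}$ and $f_{inf}$ against a coalition of servers B and C.

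First I would check that the input list supplied to $S_{BC}$ matches exactly what $f_{pre}$ and $f_{inf}$ legitimately disclose to the coalition: the coalition's own model shares $\{\mathbf{F}_i^2,\mathbf{F}_i^3\}_{i=l+1}^L$, the keys $\textsf{pk}_0$ and $\textsf{cpk}$, the layer dimensions $\{(q_i,w_i)\}_{i=l+1}^L$, and the transition mask $\textsf{msk}_{l+1}$ delivered to B and C at the hand-off. No ciphertext of the user's input, no honest-party randomness other than what the ideal functionality reveals, and no plaintext weight of server A is used by the simulator, so the simulator is admissible.

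Next I would decompose the real protocol into its sub-protocols described in Section~\ref{sec:cryptographic-blocks} — MPHE key generation, MPHE encryption, homomorphic evaluation, distributed decryption (Algorithm~\ref{DistributeDec}), GC garbling and evaluation, and OT — each of which is standalone semi-honest secure by the cited references. Applying the Composition Theorem from \cite{articleFoC} then lifts the standalone security of the pieces, together with the transcript-indistinguishability established for $S_{BC}$, to full simulation security of the composed protocol. Concretely, the ciphertext components of $v_{BC}$ are absorbed by MPHE semantic security (Proposition~\ref{prop-mphe-security}), the garbled circuit and labels by GC security, and the shares $\textsf{share}_i^A$ by the perfect hiding of additive secret sharing in the unknown summand $\mathbf{r}_i^1$ held by server A.

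The main obstacle is justifying that the transition value $\textsf{msk}_{l+1}=\mathbf{x}_{l+1}-\mathbf{r}_{l+1}$ handed to servers B and C is not an \emph{extra} real-world leak beyond what $f_{inf}$ prescribes. Because $\mathbf{r}_{l+1}$ is contributed solely by the user during the preprocessing of the $(l+1)^{\text{th}}$ layer and is never revealed to any server in the clear, $\textsf{msk}_{l+1}$ is uniformly distributed over $\mathbbm{Z}_q^{q_{l+1}}$ and statistically independent of $\mathbf{x}_{l+1}$; thus $S_{BC}$ can sample it uniformly, matching the real distribution. Once this transition step is argued to be simulatable, the inductive structure across the remaining layers of $\mathbf{F}_{BC}$ carries through unchanged, and the proposition follows.
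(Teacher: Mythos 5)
Your proposal is correct and follows essentially the same route as the paper: it derives the proposition from the preceding theorem exhibiting the simulator $S_{BC}$, argues component-wise indistinguishability of the view via the semantic security of \textsf{HE}/\textsf{MPHE} (Proposition~\ref{prop-mphe-security}), the security of \textsf{GC}, and the statistical equivalence of the uniformly distributed masks (in particular $\mathbf{r}_{l+1}$ contributed solely by the user and $\mathbf{r}_i^1$ held by server A), and then concludes by the Composition Theorem for the semi-honest model. Your treatment of the transition value $\textsf{msk}_{l+1}$ and the admissibility check on the simulator's inputs are somewhat more explicit than the paper's terse argument, but they are elaborations of the same proof rather than a different approach.
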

% Then we show that the view of the adversary in the real and simulated worlds are indistinguishable:
% \begin{enumerate}[itemsep=0.5pt,topsep=0.8pt,parsep=0.5pt]
%     \item The relation $\Tilde{ct}_{S_{AB}} \overset{c}{\equiv} ct_{S_{AB}}$  follows the semantic security of BFV Scheme (See Appendix~\ref{sec:securityOfBFV}).
%     \item Following the security of the Garbled Circuit (See Appendix~\ref{sec:securityOfGC}), we get 
%     \begin{eqnarray}
%     \{\widetilde{\Tilde{C}}_i,\widetilde{\textsf{label}}^B_{i},\widetilde{\textsf{label}}^C_{i}\}_{i=l+1}^{L}\overset{c}{\equiv}\{\Tilde{C}_i,\textsf{label}^B_{i},\textsf{label}^C_{i}\}_{i=1}^{L} \nonumber
%      \end{eqnarray}
     
%     \item Due to using different random values which share identical distribution hence are statistically equivalent, we get $\{\widetilde{\textsf{msk}}_{\mathbf{x}_i}\}_{i=l+2}^{L+1}\overset{c}{\equiv} \{\textsf{msk}_{\mathbf{x}_i}\}_{i=l+2}^{L+1}$
%     % \item $\{\mathbf{x}_i\}_{i=l+2}^L$ are computed by using a real intermediate prediction result $\mathbf{x}_{l+1}$ and real $\mathbf{M}_{BC}$ corresponding to a real world distribution while the simulated intermediate results $\{\Tilde{\mathbf{x}}_i\}_{i=l+2}^L$ are computed using $\mathbf{\Tilde{x}}_{l+1}$ randomly generated by $S_{BC}$ and real $\mathbf{M}_{BC}$. Without the knowledge of real input $\mathbf{x}$, we get $\mathbf{\Tilde{x}}_{i}\overset{c}{\equiv} \mathbf{x}_{i}$.
% \end{enumerate}

\paragraph{Server A, C corrupted}
% If server A, C are corrupted, not only the intermediate result $\mathbf{x}_{i}, i\in \{l+1,...,L\}$ cannot be recovered due to the missing of  $\mathbf{r}_{i}^2$ from server B, nor 
The real view $v_{AC}$ of the adversary corrupting server A and C includes:
\begin{enumerate}
    \item Ciphertexts set $ct_{S_{AC}} =\{\{ct_{r_i}\}_{i=1}^{l+1},\{ct_{r_i^2},ct_{r_i^3}\}_{i=l+2}^{L},\\\{ct_{F_i^2},ct_{F_i^3},ct_{s_i^2},ct_{s_i^2}\}_{i=l+2}^{L},\{ct_{\textsf{share}_i^A}\}_{i=l+1}^L\}$
    \item The masked input of the $i^{\text{th}}, i\in \{l+2,...,L\}$ layer $\textsf{msk}_{i}=\mathbf{x}_i-\mathbf{r}_{i}^2$
     \item The additive secret shares set $\textsf{share}^{A}_{i}=(\mathbf{F}_{i}^2+\mathbf{F}_{i}^3)\mathbf{r}_{i}-\mathbf{s}_{i}^2-\mathbf{s}_{i}^3,i\in \{l+1,...,L\}$
    \item Garbled circuit $\{\Tilde{C}_i\}_{i=1}^L$ and $\{\textsf{label}_{i}\}_{i=l+1}^{L}$ for garbled circuits
\end{enumerate}
We define a real-world simulator $S_{AC}$ that simulates the view of an adversary corrupting server A and C. $S_{AC}$ is given ${\{ \mathbf{F}_{i}^3, (q_i, w_i)\}}_{i=l+1}^L$,
$\mathbf{F}_{A},\textsf{pk}_0, \textsf{cpk}$ and proceeds as follows:
\begin{enumerate}
    \item $S_{AC}$ chooses a uniform random tape for server A and C.
    \item In the preprocessing phase:
    \begin{enumerate}
    \item $S_{AC}$, as the user, computes $\textsf{HE.Enc}(\textsf{pk}_0, \mathbf{\Tilde{r}}_{i}), i\in \{1,...,l+1\}$ to server A where $\mathbf{\Tilde{r}}_{i}$ is randomly chosen from $\mathbbm{Z}_{Q}^{q_i}$. Then $S_{AC}$ receives the evaluated ciphertext from server A.
    \item With $S_{AC}$ as the evaluator of the garbled circuits for $i^{\text{th}}, i\in \{1,...,l\}$ non-linear layer, $S_{AC}$ receives $\Tilde{C}_i, i\in \{1,...,l\}$ from server A and labels corresponding to random input generated by $S_{AC}$.
    \item $S_{AC}$, as server B, sets $\mathbf{\Tilde{F}}_i^2=\mathbf{0}$, then generates $\Tilde{\mathbf{r}}_{i}^2\leftarrow \mathbbm{Z}_{Q}^{q_i}, \Tilde{\mathbf{s}}_{i}^2\leftarrow \mathbbm{Z}_{Q}^{w_i}$ and encrypts them respectively with $\textsf{cpk}$; For processing distribute decryption, server C receives the evaluated ciphertext from server A and $\widetilde{\textsf{share}}_i^A = (\mathbf{F}_i^3+\mathbf{\Tilde{F}}_i^2)(\mathbf{r}_{i}^3+\Tilde{\mathbf{r}}_{i}^2+\mathbf{r}_{i}^1)-\mathbf{s}_{i}^3-\Tilde{\mathbf{s}}_{i}^2$ from $S_{AB}$
    \item $S_{AC}$, as server B, runs the simulator for garbled circuits and generates ${\Tilde{C}}_{i}, i\in \{l+1,...,L\}$, then sends labels corresponding to $\mathbf{\Tilde{r}}_{i}^2$ to server C
    \end{enumerate}
    \item In the inference phase:
    \begin{enumerate}
        \item Preamble: $S_{AC}$ sends $-\mathbf{\Tilde{r}}_1$ to server A to evaluate the first linear layer.
        \item $S_{AC}$, as the user, obtains labels from server A
        \item $S_{AC}$, as the user, sends a random value as the output of garbled circuits back to server A.
        \item Transition: $S_{AC}$, as server B, receives the output of the first $l$ layers
        \item $S_{AC}$, as server B, sends labels corresponding to $\mathbf{\Tilde{s}}_{i}^2$ to server C
        \item $S_{AC}$ receives the output of garbled circuits from server C
    \end{enumerate}
\end{enumerate}
\begin{theorem}
The view simulated by $S_{AC}$ is computaionally indistinguishable from the real view $v_{AC}$ when the adversary corrupting server A and C.
\end{theorem}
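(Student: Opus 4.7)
The plan is to mirror the structure of Theorem~\ref{theo:server-a} and the preceding (Server B,C) theorem. I would decompose the joint view $v_{AC}$ into four categories: (i) the MPHE ciphertexts $ct_{S_{AC}}$ exchanged during preprocessing of $\mathbf{F}_{BC}$; (ii) the masked intermediate values $\textsf{msk}_i = \mathbf{x}_i - \mathbf{r}_i^2$ visible to the coalition in the inference phase; (iii) the recovered secret shares $\textsf{share}^A_i$; and (iv) the garbled circuits $\{\Tilde{C}_i\}$ together with the labels that are actually forwarded to server A or C. I would then argue indistinguishability of each category separately and invoke the Composition Theorem for the semi-honest model to glue them together, exactly as in the previous two theorems.

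For (i), the only party whose secret key is missing from the coalition is server B, so the aggregated secret $\textsf{csk}$ remains hidden from the adversary. Proposition~\ref{prop-mphe-security} then yields that replacing server B's real contributions $(ct_{F_i^2}, ct_{r_i^2}, ct_{s_i^2})$ by $S_{AC}$'s encryptions of $(\mathbf{0}, \Tilde{\mathbf{r}}_i^2, \Tilde{\mathbf{s}}_i^2)$ is computationally indistinguishable. For (ii), the mask $\mathbf{r}_i^2$ used in the inference phase of $\mathbf{F}_{BC}$ is sampled uniformly by the honest server B and never leaves B in cleartext, so $\textsf{msk}_i$ is uniform on $\mathbbm{Z}_q^{q_i}$ and statistically equivalent to the simulator's uniform draw. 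For (iii), the output mask $\mathbf{s}_i^2$ held by honest B is uniform and one-time, so $(\mathbf{F}_i^2 + \mathbf{F}_i^3)\mathbf{r}_i - \mathbf{s}_i^2 - \mathbf{s}_i^3$ is distributed uniformly regardless of the true value of $\mathbf{F}_i^2$, which is precisely why the simulator can set $\mathbf{\Tilde{F}}_i^2 = \mathbf{0}$ without being detected. For (iv), server A's garblings of $\{\Tilde{C}_i\}_{i=1}^l$ are trivially reproducible since server A is corrupt and its randomness is fixed, while server B's garblings of $\{\Tilde{C}_i\}_{i=l+1}^L$ together with the labels that reach the corrupt C follow from the standard GC simulator of \cite{article} (B is the honest garbler, so its internal randomness is hidden).

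The main obstacle, as I see it, sits inside the distributed-decryption subroutine \textsf{MPHE.DisDec}: during preprocessing of each layer $i \in \{l+2,\dots,L\}$ the adversary observes server A's own partial decryption $pd_1$ (honestly computable since server A is corrupt and holds $\textsf{sk}_1$), server C's partial decryption $pd_3$, and must also observe a partial decryption $pd_2$ ostensibly produced by honest server B. The simulator must manufacture a value $\widetilde{pd}_2$ that both (a) is statistically close to a fresh \textsf{MPHE.Reconstruct} output under $\textsf{sk}_2$ and (b) combines with $(pd_1, pd_3)$ to yield the simulated share $\widetilde{\textsf{share}}^A_i$ rather than the real one. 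This is the standard threshold-decryption simulation trick, for which correctness and statistical indistinguishability of the smudging noise are established in Mouchet et al.~\cite{mouchet2021multiparty}. A secondary but lighter subtlety is the OT transcript between honest B (sender) and corrupt C (receiver) during \textsf{GC.Transfer}, which is handled by the standard OT-receiver simulator.

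Once these pieces are in place, the argument closes as in Theorem~\ref{theo:server-a}: by a hybrid walk through the four categories above, each adjacent pair of hybrids is indistinguishable by exactly one of Proposition~\ref{prop-he-security}, Proposition~\ref{prop-mphe-security}, GC security, OT security, or statistical equivalence of fresh uniform samples, and the Composition Theorem~\cite{articleFoC} promotes the pairwise indistinguishability into indistinguishability of the full composed view, giving $\widetilde{v}_{AC} \overset{c}{\equiv} v_{AC}$ as required.
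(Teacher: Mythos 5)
Your proposal is correct and follows essentially the same route as the paper, which proves this theorem by reference to Theorem~\ref{theo:server-a}: ciphertext indistinguishability from MPHE semantic security (server B being the only uncorrupted key holder), statistical equivalence of the uniformly masked values $\textsf{msk}_i$ and $\textsf{share}^A_i$, GC/OT security for the garbled circuits and labels, and the Composition Theorem to assemble the pieces. Your explicit treatment of the simulated partial decryption $\widetilde{pd}_2$ inside \textsf{MPHE.DisDec} is a detail the paper leaves implicit (deferring to Mouchet et al.), and is a worthwhile addition rather than a departure.
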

\begin{proof}
Similar to the proof of Theorem~\ref{theo:server-a}, the views are computationally indistinguishable based on the security properties of \textsf{HE}, \textsf{MPHE}, \textsf{GC}, and the use of randomness that are statistically equivalent.
\end{proof}
The arguments above yields the following security property. 
\begin{prop}
    SECO introduced in Section~\ref{section:protocol-description} securely realizes $f_{pre}$ and $f_{inf}$ in the presence of semi-honest adversary controlling server A and C.
\end{prop}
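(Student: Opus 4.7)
The plan is to follow the same simulation-paradigm template used in the earlier cases (server A,B corrupted; user corrupted; server B,C corrupted), adapted to the particular information flow when the adversary controls the gateway server A together with the remote evaluator C. By the Composition Theorem for semi-honest protocols, it suffices to (i) reduce \textsf{PRE} and \textsf{INF} to the underlying \textsf{HE}, \textsf{MPHE}, \textsf{GC} and OT building blocks whose standalone simulators are already known, and then (ii) argue indistinguishability of the few remaining ``raw'' values the adversary sees, namely the masked intermediates $\textsf{msk}_i$ and the additive shares $\textsf{share}_i^A$.

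First I would write out the real view $v_{AC}$ exactly as enumerated just above the proposition: the ciphertext set $ct_{S_{AC}}$, the masked inputs $\{\mathbf{x}_i-\mathbf{r}_i^2\}_{i=l+2}^L$ (since only the honest party B contributes $\mathbf{r}_i^2$ to the mask visible to A,C), the additive shares $\{\textsf{share}_i^A\}_{i=l+1}^L$, and the garbled circuits with labels. Then I would invoke the already-constructed simulator $S_{AC}$ (the one described right before the proposition). The critical design point in that simulator is that server B's partial weights $\mathbf{F}_i^2$ and randomness $\mathbf{r}_i^2,\mathbf{s}_i^2$ are unknown to the adversary, so $S_{AC}$ can safely substitute $\tilde{\mathbf{F}}_i^2=\mathbf{0}$ together with freshly sampled $\tilde{\mathbf{r}}_i^2,\tilde{\mathbf{s}}_i^2$ when producing MPHE ciphertexts on B's behalf, and invoke the standard GC simulator for circuits garbled by B.

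Next, to prove the theorem, I would decompose $v_{AC}$ into four groups and argue indistinguishability group by group. For the ciphertexts in $ct_{S_{AC}}$ that are produced by the honest party B under $\textsf{cpk}$, indistinguishability follows immediately from Proposition~\ref{prop-mphe-security}, because the adversary controls at most $N-1$ of the MPHE participants. For the garbled circuits $\{\tilde C_i\}_{i=l+1}^L$ (garbled by B in the real world, by the GC-simulator in the ideal world) together with the labels obtained via OT, indistinguishability is exactly the standalone security property of the \textsf{GC} and OT protocols cited from \cite{article}. For the additive share $\widetilde{\textsf{share}}_i^A=(\mathbf{F}_i^3+\tilde{\mathbf{F}}_i^2)(\tilde{\mathbf{r}}_i^2+\mathbf{r}_i^1+\mathbf{r}_i^3)-\mathbf{s}_i^3-\tilde{\mathbf{s}}_i^2$, I would observe that because $\tilde{\mathbf{s}}_i^2$ is a fresh uniform mask over $\mathbb{Z}_q^{w_i}$ unknown to the adversary, the distribution is uniform and matches $\textsf{share}_i^A$, which is similarly one-time-padded by $\mathbf{s}_i^2$. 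The same uniform-mask argument handles the masked intermediates $\textsf{msk}_i=\mathbf{x}_i-\mathbf{r}_i^2$ for $i\in\{l+2,\dots,L\}$: these are perfectly hidden by the honest $\mathbf{r}_i^2$ and hence statistically identical to the simulator's uniform sample. Concluding, $\widetilde{v}_{AC}\overset{c}{\equiv}v_{AC}$, which together with the Composition Theorem yields that \textsf{PRE} and \textsf{INF} securely realize $f_{pre}$ and $f_{inf}$ against a semi-honest adversary controlling $\{A,C\}$.

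The main obstacle I anticipate is the asymmetry at the \emph{transition} step between $\mathbf{F}_A$ and $\mathbf{F}_{BC}$: server A is corrupted and saw the clear-plus-mask value $\mathbf{x}_{l+1}-\mathbf{r}_{l+1}$ at the boundary, while server C is also corrupted and receives this quantity as input to the three-party phase, yet the simulator must not need to know $\mathbf{x}_{l+1}$. This works because $\mathbf{r}_{l+1}$ is the user's fresh randomness, unknown to A and C, so the boundary value can be simulated by a uniformly random element of $\mathbb{Z}_q^{q_{l+1}}$; I would verify this formally by a short hybrid argument that replaces $\mathbf{x}_{l+1}-\mathbf{r}_{l+1}$ with a uniform sample, citing the perfect hiding of 2-of-2 additive secret sharing. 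Once that hybrid is justified, the rest of the indistinguishability chain is a routine book-keeping exercise following the same pattern as Theorem~\ref{theo:server-a}.
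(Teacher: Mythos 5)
Your proposal is correct and follows essentially the same route as the paper: it invokes the same simulator $S_{AC}$, the same decomposition of the view $v_{AC}$ into ciphertexts, garbled circuits with labels, additive shares, and masked intermediates, and the same appeals to the Composition Theorem, \textsf{MPHE} semantic security, \textsf{GC}/OT security, and the statistical equivalence of the uniform masks. The additional detail you supply on the transition boundary (simulating $\mathbf{x}_{l+1}-\mathbf{r}_{l+1}$ by a uniform sample via the hiding of additive secret sharing) merely makes explicit what the paper's terse proof leaves implicit.
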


\section{Delphi-3} \label{delphi3}
DELPHI-3 is a 3PC protocol directly expanded from DELPHI. DELPHI-3 sets server B as the garbler and server A as the evaluator for GCs. 
\paragraph{Preprocessing for $\mathbf{F}_{BC}$} The preprocessing of linear layers is the same with SECO as described in Algorithm~\ref{Pre-com-lin-L}. The preprocessing of ReLU is as follows: with server B as the garbler, server A as the evaluator, and server C providing input as the third party, server B first prepares the garbled circuit $\Tilde{C}_i$ and labels by running \textsf{GC.Garble}$(\textsf{Params},C_i)$ where \textsf{Params} is the security parameters and $C_i$ is described in Algorithm~\ref{alg:C_i_ABC} in Appendix~\ref{sec:peseudo}. To transmit the labels corresponding to actual input values, server B sends the labels of the input value provided by itself to server A through a public channel. For input values from server A and C, server A and C first obtain the actual labels from server B via OT. Then server A collects the labels obtained by server C through a public channel. 
\paragraph{Online-inference for $\mathbf{F}_{BC}$} The inference of $\mathbf{F}_{BC}$ involves the participation of server A, B, and C. At the beginning of the inference for $i^{\text{th}}$ layer $i \in \{l+1,...,L\}$, server B and C hold $\mathbf{x}_{i}-\mathbf{r}_{i}$ where $\mathbf{r}_{i}$ is solely contributed by the user when $i=l+1$, and is contributed by server A, B and C for linear other layers of $\mathbf{F}_{BC}$. Server A, B and C collaborate
% run Algorithm~\ref{infer_L}
to do inference on $i^{\text{th}}$ layer $i \in \{l+1,...,L\}$. First server B and C will compute $\mathbf{F}_{i}^j(\mathbf{x}_{i}-\mathbf{r}_{i})+\mathbf{s}_{i}^j$, resulting in three servers holding three additive shares of the real prediction result for the current layer $(\mathbf{F}_{i}^2+\mathbf{F}_{i}^3)\mathbf{x}_{i}$. The garbled circuit $\Tilde{C}_i$ on server A (Algorithm~\ref{alg:C_i_ABC} in Appendix~\ref{sec:peseudo}) first aggregates shares to recover $(\mathbf{F}_{i}^2+\mathbf{F}_{i}^3)\mathbf{x}_{i}$, then computes ReLU of it, finally masks the ReLU output with the randomness of the next layer contributed by three servers. The entire process is in the encrypted domain. Server A can evaluate the garbled circuit when collecting all labels for input wires of $\Tilde{C}_i$. Note the inefficiency of this method is that despite three servers holding the respective additive shares of the real prediction result, the evaluator of the garbled circuit already obtains the labels for server A's share during the pre-computation for $\mathbf{F}_{BC}$. Thus server A does not need to involve in the inference phase of $\mathbf{F}_{BC}$. We observe this inefficiency and set server C as the evaluator in SECO instead.

\section{Setup phase in SECO}
The setup phase in SECO includes the MPHE key generation and the preparation for $\mathbf{F}_{BC}$. This phase involves three servers. In Table~\ref{tab:setup-1} and Table~\ref{tab:setup-2}, we provide the execution time and communication cost for the setup phase.

\begin{table*}[]
\centering
{\small
\caption{Setup phase of SECO on ResNet32 (1)}\label{tab:setup-1}
\begin{tabular}{ccccccccc}
\toprule
Numbers of layers on server B,C      & 60 & 56 & 52 & 48 & 44 & 40 & 36 &32  \\
                   \midrule
Execution time (s) &   290.202&271.366&250.965&233.168&215.994&192.566&175.395&162.568 \\
Comm. cost (GB)    &  21.609 &  19.920 &  18.230  &  16.541  &   14.851 & 13.161   &  11.747  &  10.621  \\
\bottomrule
\end{tabular}}
\end{table*}

\begin{table*}[]
\centering
{\small
\caption{Setup phase of SECO on ResNet32 (2)}\label{tab:setup-2}
\begin{tabular}{cccccccc}
\toprule
Numbers of layers on server B,C     & 28 & 24 & 20& 16 & 12 & 8 & 4 \\
                   \midrule
Execution time (s) & 148.177&136.214&117.166&91.340&68.487&41.719&16.042  \\
Comm. cost (GB)    &  9.495 & 8.368  &  7.242  &  5.822  &  4.404  &  2.985  &    1.567  \\
\bottomrule
\end{tabular}}
\end{table*}

\section{Structures of Neural Networks}
\paragraph{MiniONN} This is a 4 layer network with 2 convolutional and 2 fully-connected layers selected from prior work MiniONN \cite{liu2017oblivious}. The structure is provided in Table~\ref{minionn}. It has around 10,500 parameters in total.
\paragraph{LeNet} This network, proposed by LeCun et al. \cite{726791} contains 2 convolutional layers and 2 fully connected layers with about 431K parameters. Table~\ref{lenet} shows the structure.

\begin{table*}[h]
\centering
\begin{tabular}{c|c|c|c|c}
\toprule
\textbf{Layer} & \textbf{Type} & \textbf{Input Size} & \textbf{Output Size} & \textbf{Details} \\ \hline
1              & Convolution  & 28$\times$28$\times$1 & 24$\times$24$\times$16    & 5$\times$5 filters, stride 1 \\ \hline
2              & Subsampling  & 24$\times$24$\times$16& 12$\times$12$\times$16      & 2$\times$2 average pooling, stride 2 \\ \hline
3             & ReLU  & 1 $\times$ 2304& 1 $\times$ 2304     & \\ \hline
4              & Convolution  & 12$\times$12$\times$16 & 8$\times$8$\times$50    & 5$\times$5 filters, stride 1 \\ \hline
5              & Subsampling  & 8$\times$8$\times$16  & 4$\times$4$\times$16      & 2$\times$2 average pooling, stride 2 \\ \hline
6             & ReLU  & 1$\times$256 & 1$\times$256     &  \\ \hline
7             & Fully Connected & 256 & 100         & \\ \hline
8             & ReLU       & 100 & 100           & \\ \hline
9             & Fully Connected       & 100 & 10           & \\ \midrule
\end{tabular}
\caption{Structure of MiniONN}
\label{minionn}
\end{table*}

\begin{table*}[h]
\centering
\begin{tabular}{c|c|c|c|c}
\toprule
\textbf{Layer} & \textbf{Type} & \textbf{Input Size} & \textbf{Output Size} & \textbf{Details} \\ \hline
1              & Convolution  & 28$\times$28$\times$1 & 24$\times$24$\times$20    & 5$\times$5 filters, stride 1 \\ \hline
2              & Subsampling  & 24$\times$24$\times$20& 12$\times$12$\times$20      & 2$\times$2 average pooling, stride 2 \\ \hline
3             & ReLU  & 1 $\times$ 2880& 1 $\times$ 2880     & \\ \hline
4              & Convolution  & 12$\times$12$\times$20 & 8$\times$8$\times$50    & 5$\times$5 filters, stride 1 \\ \hline
5              & Subsampling  & 8$\times$8$\times$50  & 4$\times$4$\times$50      & 2$\times$2 average pooling, stride 2 \\ \hline
6             & ReLU  & 1$\times$800 & 1$\times$800     &  \\ \hline
7             & Fully Connected & 800 & 500         & \\ \hline
8             & ReLU       & 500 & 500           & \\ \hline
9             & Fully Connected       & 500 & 10           & \\ \midrule
\end{tabular}
\caption{Structure of LeNet}
\label{lenet}
\end{table*}

\section{Pseudocode}\label{sec:peseudo}
% \begin{algorithm}
% \caption{$\textsf{MPHE.DisDec}$} 
% \label{DistributeDec}
% {\small{
%     \begin{algorithmic}[1]
%     \Statex \textbf{Inputs}: $ct$: the ciphertext of $m$; ${\textsf{sk}_j}$: the plain secret key generated by the contributor of $\textsf{cpk}$
%     \Statex \textbf{Server A}: 
%     \Statex \hskip2.0em Transmit $ct$ to server B and C
%     \Statex \hskip2.0em $\textsf{MPHE.Reconstruct}(ct,\textsf{sk}_{1})\rightarrow pd_1$
%     \Statex \textbf{Server B and C}:
%     \Statex \hskip2.0em $\textsf{MPHE.Reconstruct}(ct,\textsf{sk}_{j}) \rightarrow pd_j$
%     \Statex \hskip2.0em Transmit $pd_j$ to server A
%     \Statex \textbf{Server A}:
%     \Statex \hskip2.0em Run $\textsf{MPHE.Dec}(ct,\{pd_j\}_{\{j \in \{1,2,3\}\}})$ to obtain $m$
%     \end{algorithmic}}}
% \end{algorithm}

\begin{algorithm}
\caption{A circuit $C_{i}$ that computes $ReLU$ function of the $i^{\text{th}}$ layer, $i \in \{1,...,l\}$} 
\label{alg:C_i_UA}
{{
    \begin{algorithmic}[1]
    \Statex Input from the user: $\mathbf{F}_{i}\mathbf{r}_{i}-\mathbf{s}_{i},\mathbf{r}_{i+1}$
    \Statex Input from server A: $\mathbf{F}_{i}(\mathbf{x}_{i}-\mathbf{r}_{i})+\mathbf{s}_{i},\mathbf{d}_{i+1}$
    \State Compute $\mathbf{F}_{i}\mathbf{x}_{i}=\mathbf{F}_{i}(\mathbf{x}_{i}-\mathbf{r}_{i})+\mathbf{s}_{i}+\mathbf{F}_{i}\mathbf{r}_{i}-\mathbf{s}_{i}$
    \State Compute $\mathbf{x}_{i+1}=ReLU(\mathbf{F}_{i}\mathbf{x}_{i})$
    \State Output $\mathbf{x}_{i+1}-\mathbf{r}_{i+1}-\mathbf{d}_{i+1}$
    \end{algorithmic}}}
\end{algorithm}
\begin{algorithm}
\caption{Pre-Compute the share for the $i^{\text{th}}$ linear layer, $i \in \{1,...l\}$} 
\label{pre-com-lin-1}
{{
    \begin{algorithmic}[1]
    \Statex \textbf{Inputs}: $(q_i, w_{i})$: the input and output shape of the $i^{\text{th}}$ layer; $(\textsf{sk}_0, \textsf{pk}_0)$: the plain HE key pair generated by the user
    \Statex \textbf{The user}: 
    \Statex \hskip2.0em $\mathbf{r}_{i}\leftarrow \mathbbm{Z}_{q}^{q_i}$ 
    \Statex \hskip2.0em Encrypt $ct_{r_i}=\textsf{HE.Enc}(\textsf{pk},\mathbf{r}_{i})$
    \Statex \hskip2.0em Transmit $ct_{r_i}$ to server A 
    \Statex \textbf{Server A}:
    \Statex \hskip2.0em $\mathbf{s}_{i} \leftarrow \mathbbm{Z}_{q}^{w_i}$
    % \Statex \hskip2.0em $\mathbf{d}_{i+1} \leftarrow \mathbbm{Z}_{q}^{q_i}$
    \Statex \hskip2.0em Compute 
    $\textsf{HE.Enc}(\textsf{pk},\mathbf{F}_{i}\mathbf{r}_{i}-\mathbf{s}_{i})$ and transmit this ciphertext to the user
    \Statex \textbf{The user}:
    \Statex \hskip2.0em Decrypt the ciphertext to obtain $\mathbf{F}_{i}\mathbf{r}_{i}-\mathbf{s}_{i}$ using $\textsf{sk}_0$
    \end{algorithmic}}}
\end{algorithm}
% \begin{algorithm}
% \caption{A circuit $C_{i}$ that computes $ReLU$ function of the $i^{\text{th}}$ layer $i \in \{l+1,...,L\}$} 
% \label{alg:C_i_ABC}
% {\small{
%     \begin{algorithmic}[1]
%     \Statex Input from server A: $(\mathbf{F}_{i}^2+\mathbf{F}_{i}^3)\mathbf{r}_{i}-\mathbf{s}_{i}^2-\mathbf{s}_{i}^3,\mathbf{r}_{i+1}^1$
%     \Statex Input from server B: $\mathbf{F}_{i}^2(\mathbf{x}_{i}-\mathbf{r}_{i})+\mathbf{s}_{i}^2, \mathbf{r}_{i+1}^2$
%     \Statex Input from server C: $\mathbf{F}_{i}^3(\mathbf{x}_{i}-\mathbf{r}_{i})+\mathbf{s}_{i}^3,\mathbf{r}_{i+1}^3$
%     \State Compute $(\mathbf{F}_{i}^2+\mathbf{F}_{i}^3)\mathbf{x}_{i}=(\mathbf{F}_{i}^2+\mathbf{F}_{i}^3)\mathbf{r}_{i}-\mathbf{s}_{i}^2-\mathbf{s}_{i}^3+\mathbf{F}_{i}^2(\mathbf{x}_{i}-\mathbf{r}_{i})+\mathbf{s}_{i}^2+\mathbf{F}_{i}^3(\mathbf{x}_{i}-\mathbf{r}_{i})+\mathbf{s}_{i}^3$
%     \State Compute $\mathbf{x}_{i+1}=ReLU((\mathbf{F}_{i}^2+\mathbf{F}_{i}^3)\mathbf{x}_{i})$
%     \State Output $\mathbf{x}_{i+1}-\mathbf{r}_{i+1}=\mathbf{x}_{i+1}-\mathbf{r}_{i+1}^1-\mathbf{r}_{i+1}^2-\mathbf{r}_{i+1}^3$
%     \end{algorithmic}}}
% \end{algorithm}

\begin{algorithm}
\caption{Pre-Compute the share for the $l+1^{\text{th}}$ linear layer} 
\label{Pre-com-lin-l-1}
{{
    \begin{algorithmic}[1]
    \Statex \textbf{Inputs}: $(q_{l+1}, w_{l+1})$: the input and output shape of the $l+1^{\text{th}}$ layer; $\textsf{cpk}$: common public key; 
    % $j$: server index ($1$ for server A, $2$ for server B, $3$ for server C)
    \Statex \textbf{The user}:
    \Statex \hskip2.0em $\mathbf{r}_{l+1}\leftarrow \mathbbm{Z}_{q}^{q_{l+1}}$
    \Statex \hskip2.0em Compute $ct_{r_{l+1}}=\textsf{MPHE.Enc}(\textsf{cpk},\mathbf{r}_{l+1})$
    \Statex \hskip2.0em Transmit $ct_{r_{l+1}}$ to server A
    \Statex \textbf{Server B and C}: 
    \Statex \hskip2.0em $\mathbf{s}_{l+1}^j\leftarrow \mathbbm{Z}_{q}^{w_{l+1}}$ 
    \Statex \hskip2.0em Compute $\textsf{MPHE.Enc}(\textsf{cpk},\mathbf{s}^j_{l+1}) \rightarrow ct_{s^j_{l+1}}$, 
    $\textsf{MPHE.Enc}(\textsf{cpk},\mathbf{F}^j_{l+1})$
    $\rightarrow ct_{F^j_{l+1}}$
    \Statex \hskip2.0em Transmit $(ct_{s^j_{l+1}},ct_{F^j_{l+1}})$ to server A 
    \Statex \textbf{Server A}:
    \Statex \hskip2.0em \textsf{MPHE.Eval}($ct_{F^2_{l+1}},ct_{F^3_{l+1}},\textsf{Add}$)$\rightarrow ct_{F_{l+1}}$
    \Statex \hskip2.0em \textsf{MPHE.Eval}($ct_{s^2_{l+1}},ct_{s^3_{l+1}},\textsf{Add}$)$\rightarrow ct_{s_{l+1}}$
    \Statex \hskip2.0em \textsf{MPHE.Eval}(\textsf{MPHE.Eval}($ct_{F_{l+1}},ct_{r_{l+1}},\textsf{Lin-OP}$)
    $,ct_{s_{l+1}},\textsf{Sub})\rightarrow ct_{F_{l+1}r_{l+1}-s_{l+1}}$
    \Statex \textbf{Server A, B, C}:
    \Statex \hskip2.0em $\textsf{MPHE.DisDec}(ct_{F_{l+1}r_{l+1}-s_{l+1}},\{sk_j\}_{j\in \{1,2,3\}})$
    \Statex \hskip2.0em Server A obtains $(\mathbf{F}^2_{l+1}+\mathbf{F}^3_{l+1})\mathbf{r}_{l+1}-\mathbf{s}^2_{l+1}-\mathbf{s}^3_{l+1}$
    \end{algorithmic}}}
\end{algorithm}
% \begin{algorithm}
% \caption{Pre-Compute the share for the $i^{\text{th}}$ linear layer, $i \in \{1,...l\}$} 
% \label{pre-com-lin-1}
% {{
%     \begin{algorithmic}[1]
%     \Statex \textbf{Inputs}: $(q_i, w_{i})$: the input and output shape of the $i^{\text{th}}$ layer; $(\textsf{sk}_0, \textsf{pk}_0)$: the plain HE key pair generated by the user
%     \Statex \textbf{The user}: 
%     \Statex \hskip2.0em $\mathbf{r}_{i}\leftarrow \mathbbm{Z}_{q}^{q_i}$ 
%     \Statex \hskip2.0em Encrypt $ct_{r_i}=\textsf{HE.Enc}(\textsf{pk},\mathbf{r}_{i})$
%     \Statex \hskip2.0em Transmit $ct_{r_i}$ to server A 
%     \Statex \textbf{Server A}:
%     \Statex \hskip2.0em $\mathbf{s}_{i} \leftarrow \mathbbm{Z}_{q}^{w_i}$
%     % \Statex \hskip2.0em $\mathbf{d}_{i+1} \leftarrow \mathbbm{Z}_{q}^{q_i}$
%     \Statex \hskip2.0em Compute 
%     $\textsf{HE.Enc}(\textsf{pk},\mathbf{F}_{i}\mathbf{r}_{i}-\mathbf{s}_{i})$ and transmit this ciphertext to the user
%     \Statex \textbf{The user}:
%     \Statex \hskip2.0em Decrypt the ciphertext to obtain $\mathbf{F}_{i}\mathbf{r}_{i}-\mathbf{s}_{i}$ using $\textsf{sk}_0$
%     \end{algorithmic}}}
% \end{algorithm}
 \begin{algorithm}
\caption{Construct Garbled Circuit for the $i^{\text{th}}$ activation layer, $i \in \{1,...,l\}$} 
\label{pre-bui-GC-1}
{{
    \begin{algorithmic}[1]
    \Statex \textbf{Inputs}: \textsf{Params}: Garbled Circuit security parameters, $C_{i}$: the circuit in Algorithm~\ref{alg:C_i_UA}; $(\mathbf{F}_{i}\mathbf{r}_{i}-\mathbf{s}_{i},\mathbf{r}_{i+1})$: the pre-computed additive share of $i^\text{th}$ layer and the randomness for next layer held by server A ; $q_i$: the input shape of $i^\text{th}$ layer
    \Statex \textbf{Server A}: 
    \Statex \hskip2.0em $\mathbf{d}_{i+1} \leftarrow \mathbbm{Z}_{q}^{q_i}$
    \Statex \hskip2.0em \textsf{GC.Garble}$(\textsf{Params},C_i)\rightarrow \Tilde{C}_{i}, \textsf{label}$
    \Statex \hskip2.0em Transmit $\Tilde{C}_{i}$ to the user
    \Statex \textbf{The user and server A}:
    \Statex \hskip2.0em \textsf{GC.Transfer}($\textsf{label},\mathbf{F}_{i}\mathbf{r}_i-\mathbf{s}_{i})\rightarrow \textsf{label}_{\mathbf{F}_{i}\mathbf{r}_i-\mathbf{s}_{i}}$
    \Statex \hskip2.0em \textsf{GC.Transfer}($\textsf{label},\mathbf{r}_{i+1})\rightarrow \textsf{label}_{\mathbf{r}_{i+1}}$
    \Statex \hskip2.0em \textsf{GC.Transfer}($\textsf{label},\mathbf{d}_{i+1})\rightarrow \textsf{label}_{\mathbf{d}_{i+1}}$
    \Statex \hskip2.0em The user obtains $\textsf{label}_{\mathbf{F}_{i}\mathbf{r}_i-\mathbf{s}_{i}},\textsf{label}_{\mathbf{r}_{i+1}},\textsf{label}_{\mathbf{d}_{i+1}}$
    \end{algorithmic}}}
\end{algorithm}

\begin{algorithm}
\caption{ Inference for the $i^{\text{th}}$ layer, $i \in \{1,...,l\}$}
\label{infer_1}
{{
    \begin{algorithmic}[1]
    \Statex \textbf{Inputs}:  $(\mathbf{F}_{i},\mathbf{x}_{i}-\mathbf{r}_{i},\mathbf{s}_{i})$: held by server A; $\Tilde{C}_{i}$: the garbled circuit of Algorithm~\ref{alg:C_i_UA}; \textsf{label}: random labels for input wires of Algorithm~\ref{alg:C_i_UA}; $\textsf{label}_{\mathbf{F}_{i}\mathbf{r}_i-\mathbf{s}_{i}},\textsf{label}_{\mathbf{r}_{i+1}},\textsf{label}_{\mathbf{d}_{i+1}}$: labels tranmitted to the user
    \Statex \textbf{Server A}: 
    \Statex \hskip2.0em Compute $\mathbf{F}_{i}(\mathbf{x}_{i}-\mathbf{r}_{i})+\mathbf{s}_{i}$
    \Statex \textbf{Server A and the user}: 
    \Statex\hskip2.0em \textsf{GC.Transfer}$(\textsf{label},$
    $\mathbf{d}_{i},\mathbf{F}_{i}(\mathbf{x}_{i}-\mathbf{r}_{i})+\mathbf{s}_{i})\rightarrow      \textsf{label}_{d_{i}}$,
                      $\textsf{label}_{F_{i}(x_{i}-r_{i})+s_{i}}$
    \Statex \hskip2.0em The user obtains $\textsf{label}_{\mathbf{F}_{i}(\mathbf{x}_{i}-\mathbf{r}_{i})+\mathbf{s}_{i}},\textsf{label}_{d_{i}}$
    \Statex \textbf{The user}:
    \Statex \hskip2.0em \textsf{GC.Eval}$(\Tilde{C}_{i},\textsf{label}_{{F}_{i}({x}_{i}-{r}_{i})+{s}_{i}},\textsf{label}_{d_{i}},\textsf{label}_{{r}_{i+1}},$
    $\textsf{label}_{{F}_{i}{r}_i-{s}_{i}})\rightarrow\mathbf{x}_{i+1}-\mathbf{r}_{i+1}-\mathbf{d}_{i+1}$
    \Statex \hskip2.0em Send $\mathbf{x}_{i+1}-\mathbf{r}_{i+1}-\mathbf{d}_{i+1}$ to Server A 
    \Statex \textbf{Server A}: 
    \Statex \hskip2.0em Compute $\mathbf{x}_{i+1}-\mathbf{r}_{i+1}$
    \end{algorithmic}}}
\end{algorithm}
% \begin{algorithm}
% \caption{$\textsf{MPHE.DisDec}$} 
% \label{DistributeDec}
% {{
%     \begin{algorithmic}[1]
%     \Statex \textbf{Inputs}: $ct$: the ciphertext of $m$; ${\textsf{sk}_j}$: the plain secret key generated by the contributor of $\textsf{cpk}$
%     \Statex \textbf{Server A}: 
%     \Statex \hskip2.0em Transmit $ct$ to server B and C
%     \Statex \hskip2.0em $\textsf{MPHE.Reconstruct}(ct,\textsf{sk}_{1})\rightarrow pd_1$
%     \Statex \textbf{Server B and C}:
%     \Statex \hskip2.0em $\textsf{MPHE.Reconstruct}(ct,\textsf{sk}_{j}) \rightarrow pd_j$
%     \Statex \hskip2.0em Transmit $pd_j$ to server A
%     \Statex \textbf{Server A}:
%     \Statex \hskip2.0em Run $\textsf{MPHE.Dec}(ct,\{pd_j\}_{\{j \in \{1,2,3\}\}})$ to obtain $m$
%     \end{algorithmic}}}
% \end{algorithm}

\begin{algorithm}
\caption{Construct the Garbled Circuits for the $i^{\text{th}}$ activation layer $i \in \{l+1,...,L\}$} 
\label{pre-bui-GC-l}
{{
    \begin{algorithmic}[1]
    \Statex \textbf{Inputs}: $C_{i}$: the circuit in Algorithm~\ref{alg:C_i_ABC}; $\mathbf{E}_{i}^1$: the pre-computed additive share held by server A; $\mathbf{r}_{i+1}^2$: the randomness generated by server B 
    \Statex \textbf{Server B}: 
    \Statex \hskip2.0em \textsf{GC.Garble}$(\textsf{Params},C_i)\rightarrow \Tilde{C}_{i},\textsf{label}$
    \Statex \hskip2.0em Transmit $\Tilde{C}_{i}$ to server C
    \Statex \textbf{Server A and B}:
    \Statex \hskip2.0em \textsf{GC.Transfer}($\textsf{label},\mathbf{E}_{i}^1)\rightarrow $
    $\textsf{label}_{\mathbf{E}_{i}^1}$
    \Statex \hskip2.0em Server A obtains $\textsf{label}_{\mathbf{E}_{i}^1}$ transmits it to server C
    \Statex \textbf{Server B and C}:
    \Statex \hskip2.0em \textsf{GC.Transfer}($\textsf{label},\mathbf{r}_{i+1}^3)\rightarrow \textsf{label}_{\mathbf{r}_{i+1}^3}$
    \Statex \hskip2.0em Server C obtains $\textsf{label}_{\mathbf{r}_{i+1}^3}$
    \end{algorithmic}}}
\end{algorithm}

% that's all folks
\end{document}